\documentclass[twocolumn,10pt]{IEEEtran}
\usepackage{cite}
\usepackage{graphicx}
\usepackage{amsmath}
\usepackage{amssymb}
\usepackage{bm}
\usepackage{amsthm}
\usepackage{cases}
\usepackage{bigstrut}
\usepackage{algorithmic}
\usepackage{multirow}
\usepackage{tabulary}
\usepackage{placeins}
\usepackage{subcaption}
\usepackage{bigstrut}
\usepackage{bbm}
\usepackage{mathtools}
\usepackage{color}


\newtheorem{lemma}{Lemma}
\newtheorem{theorem}{Theorem}
\newtheorem{definition}{Definition}

\newtheorem{proposition}{Proposition}

\theoremstyle{remark}

\begin{document}
	\title{Secure OFDM System Design and Capacity Analysis under Disguised Jamming}
		\author{\IEEEauthorblockN{Yuan Liang ~~~ Jian Ren ~~~ Tongtong Li}\\
			\IEEEauthorblockA{
				Department of Electrical \& Computer Engineering, Michigan State University \\
				Email: \{liangy11, renjian, tongli\}@egr.msu.edu
			}}
	
	\maketitle	
	
	\begin{abstract}
		In this paper, we propose a securely precoded OFDM (SP-OFDM) system for efficient and reliable transmission under disguised jamming, where the jammer intentionally misleads the receiver by mimicking the characteristics of the authorized signal, and causes complete communication failure. More specifically, we bring off a dynamic constellation by introducing secure shared randomness between the legitimate transmitter and receiver, and hence break the symmetricity between the authorized signal and the  disguised jamming. We analyze the channel capacities of both the traditional OFDM and SP-OFDM under hostile jamming using the arbitrarily varying channel (AVC) model. It is shown that the deterministic coding capacity of the traditional OFDM is zero under the worst disguised jamming. On the other hand, due to the secure randomness shared between the authorized transmitter and receiver, SP-OFDM can achieve a positive capacity under disguised jamming since the AVC channel corresponding to SP-OFDM is not symmetrizable. A remarkable feature of the proposed SP-OFDM scheme is that while achieving strong jamming resistance, it has roughly the same high spectral efficiency as the traditional OFDM system. The robustness of the proposed SP-OFDM scheme under disguised jamming is demonstrated through both theoretic and numerical analyses.
	\end{abstract}
	
	\begin{IEEEkeywords}
		OFDM, disguised jamming, arbitrarily varying channel, channel capacity.
	\end{IEEEkeywords}
	
	\section{Introduction}	
	
	In wireless systems, one of the most commonly used techniques for limiting the effectiveness of an opponent's communication is referred to as jamming, in which the authorized user's signal is deliberately interfered by the adversary. Along with the wide spread of various wireless devices, especially with the advent of user configurable intelligent devices, jamming attack is no longer limited to battlefield or military related events, but has become an urgent and serious threat to civilian communications as well.
	
	In literature \cite{Basar1983TIT, Medard1997, Kashyap2004TIT, Song2016TC }, jamming has widely been modeled as Gaussian noise. Based on the noise jamming model and the Shannon capacity formula, $C = B \log(1 + SNR),$  an intuitive impression is that jamming is really harmful only when the jamming power is much higher than the signal power. However, this is only partially true. More recently, it has been found that  disguised jamming \cite{Lapidoth1998TIT, Song2016TIFS, Zhang2013TWC2, Song2014Globalcom}, where the jamming is highly correlated with the signal, and has a power level close or equal to the signal power,  can be much more destructive than the noise jamming;  it can reduce the system capacity to zero even when the jamming power equals the signal power. Consider the following example:
	\begin{equation}
		R = S + J + N \nonumber
	\end{equation}
	where $S$ is the authorized signal, $J$ the jamming interference, $N$ the noise independent of $J$ and $S$, and $R$ the received signal. If the jammer is capable of eavesdropping on the symbol constellation and the codebook of the transmitter, it can simply replicate one of the sequences in the codebook of the legitimate transmitter, the receiver, then, would not be able to distinguish between the authorized sequence and the jamming sequence, resulting in a complete communication failure \cite[ch 7.3]{Gamal2012}.

	Orthogonal frequency division multiplexing (OFDM), due to its high spectral efficiency and robustness under fading channels, has been widely used in modern high speed multimedia communication systems \cite{Hwang2009TVT}, such as LTE and WiMax. However, unlike the spread spectrum techniques \cite{CDMA}, OFDM mainly relies on channel coding for communication reliability under hostile jamming, and has very limited built-in resilience against jamming attacks~\cite{Jun2007WTS, Amuru2015TIFS, Mailaender2013, Clancy2011ICC, Cuccaro2017, Pan2012, Lightfoot2009JASP}. For example, in \cite{Jun2007WTS}, the bit error rate (BER) performance of the traditional OFDM was explored under full-band and partial band Gaussian jamming, as well as multitone jamming. It was shown that  OFDM is quite fragile under jamming, as BER can go above $10^{-1}$ when the jamming power is the same as the signal power. In \cite{Clancy2011ICC, Cuccaro2017, Pan2012}, the jamming attacks aiming at the pilots in OFDM systems were studied. It was shown that when the system standard is public and no encryption is applied to the transmitted symbol sequence, pilot attacks can completely nullify the channel estimation and synchronization of OFDM, and hence result in complete communication failure. Most existing work \cite{Jun2007WTS, Pan2012, Amuru2015TIFS} has been focused on the jamming attacks which damage OFDM by minimizing the signal-to-interference power ratio (SIR). In this paper, we identify the threat to OFDM from the disguised jamming: when the jamming interference is also OFDM modulated, the receiver can easily be deceived into synchronizing with the jamming interference instead of the legitimate signal, hence paralyzing the legitimate transmission.
	
	In \cite{Mailaender2013}, the anti-jamming performance of Frequency Hopped (FH) OFDM system was explored. Like the traditional FH system, this approach achieves jamming resistance through large frequency diversity and sacrifices the spectral efficiency of OFDM.  In \cite{Lightfoot2009JASP}, a collision-free frequency hopping (CFFH) scheme was proposed, where the basic idea was to randomize the jamming interference through frequency domain interleaving based on secure, collision-free frequency hopping. The most significant feature of CFFH based OFDM is that it is very effective under partial band jamming, and at the same time, has the same spectral efficiency as the original OFDM.  However, CFFH based OFDM is still fragile under \emph{disguised jamming} \cite{Song2016TIFS, Zhang2013TWC2, Song2014Globalcom, Ericson1986TIT}.

	To combat disguised jamming in OFDM systems, a precoding scheme was proposed in \cite{Song2014Globalcom}, where extra redundancy is introduced to achieve jamming resistance. However, lack of plasticity in the precoding scheme results in inadequate reliability under cognitive disguised jamming. As OFDM being identified as a major modulation technique for the 5G systems, there is an ever increasing need on the development of  secure and efficient OFDM systems that are reliable under hostile jamming, especially the destructive disguised jamming.

	If we examine disguised jamming carefully, we can see that the main issue there is the symmetricity between the authorized signal and the jamming interference. Intuitively, to design the corresponding anti-jamming system, the main task is to break the symmetricity between the authorized signal and the jamming interference, or make it impossible for the jammer to achieve this symmetricity. For this purpose, encryption or channel coding at the bit level will not really help, since the symmetricity appears at the symbol level. That is, instead of using a fixed symbol constellation, we have to introduce secure randomness to the constellation, and utilize a dynamic constellation scheme, such that the jammer can no longer mimic the authorized user's signal. At the same time, the authorized user does not have to sacrifice too much on the performance, efficiency and system complexity.
    
    Motivated by the observations above and our previous research on anti-jamming system design~\cite{Song2014Globalcom, Song2016TIFS, Zhang2013TWC1, Zhang2013TWC2, Lightfoot2009JASP}, in this paper, we propose a securely precoded OFDM (SP-OFDM) system for efficient and reliable transmission under disguised jamming. By integrating advanced cryptographic techniques into OFDM transceiver design, we design a dynamic constellation  by  introducing shared randomness between the legitimate transmitter and receiver, which breaks the symmetricity between the authorized signal and the jamming interference, and hence ensures reliable performance under disguised jamming. A remarkable feature of the proposed SP-OFDM scheme is that it achieves strong jamming resistance, but has the same high spectral efficiency as the traditional OFDM system. Moreover, the change to the physical layer transceivers is minimal, feasible and affordable. The robustness of the proposed SP-OFDM scheme under disguised jamming is demonstrated through both theoretic and numerical analyses.
	
	More specifically, the main contributions of this paper can be summarized as follows:
	
	\begin{itemize}
		
		\item We design a highly secure and efficient OFDM system under disguised jamming, named securely precoded OFDM (SP-OFDM),   by exploiting secure symbol-level precoding  basing on phase randomization. The basic idea is to randomize the phase of transmitted symbols using the secure PN sequences generated from the Advanced Encryption Standard (AES) algorithm. The security is guaranteed by the secret key shared only between  the legitimate transmitter and receiver. While SP-OFDM achieves strong jamming resistance, it does not introduce too much extra coding redundancy into the system and can achieve roughly the same spectral efficiency as the traditional OFDM system. 
		
		\item We identify the vulnerability of the synchronization algorithm in the original OFDM system under disguised jamming, and propose a secure synchronization scheme for SP-OFDM which is robust against disguised jamming. In the proposed synchronization scheme, we design an encrypted cyclic prefix (CP) for SP-OFDM, and the synchronization algorithm utilizes the encrypted CP as well as the precoded pilot symbols to estimate time and frequency offsets in the presence of jamming.

		\item We analyze the channel capacity of the traditional OFDM and the proposed SP-OFDM under hostile jamming using the arbitrarily varying channel (AVC) model. It is shown that the deterministic coding capacity of the traditional OFDM is zero under the worst disguised jamming. At the same time, we prove that with the secure randomness shared between the authorized transmitter and receiver, the AVC channel corresponding to SP-OFDM is not symmetrizable, and hence SP-OFDM can achieve a positive capacity under disguised jamming. Note that the authorized user aims to maximize the capacity while the jammer aims to minimize the capacity, we show that the maximin capacity for SP-OFDM under hostile jamming is given by $C=\log \left( 1 + \frac{P_S}{P_J + P_N} \right)$ bits/symbol, where $P_s$ denotes the signal power, $P_J$ the jamming power and $P_N$ the noise power.


	\end{itemize}

	Numerical examples are provided to demonstrate the effectiveness of the proposed system under disguised jamming and channel fading. Potentially, SP-OFDM is a promising modulation scheme for high speed transmission under hostile environments. Moreover, it should be pointed out that the secure precoding scheme proposed in this paper can also be applied to modulation techniques other than OFDM.

	The rest of this paper is organized as follows. The design of the proposed SP-OFDM system is described in Section \ref{Sec:Design}. The synchronization procedure of SP-OFDM is presented in Section \ref{Sec:Sync}. The symmetricity analysis and capacity evaluation of SP-OFDM are presented in Section \ref{Sec:Analysis}. Numerical examples are provided in Section \ref{Sec:Simulation} and we conclude in Section \ref{Sec:Conclude}.

	\section{Secure OFDM System Design under Disguised Jamming} \label{Sec:Design}

	In this section, we introduce the proposed anti-jamming OFDM system with secure precoding and decoding, named as securely procoded OFDM (SP-OFDM). 
	
	\subsection{Transmitter Design with Secure Precoding}\label{Subsec:Precoder}
	The block diagram of the proposed system is shown in Fig. \ref{Fig:Sys}. Let $N_c$ be the number of subcarriers in the OFDM system and $\Phi$ the alphabet of transmitted symbols. For $i = 0, 1, \cdots, N_c-1$ and $k \in \mathbb{Z}$, let $S_{k,i} \in \Phi$ denote the symbol transmitted on the $i$-th carrier of the $k$-th OFDM block\footnote{In literature, the term \emph{OFDM symbol} is often used to denote the symbol block transmitted in one OFDM symbol period. In this paper, to avoid the ambiguity with the data symbols transmitted at each subcarrier, we choose to use the term \emph{OFDM block} instead.}. We denote the symbol vector of the $k$-th OFDM block by $\boldsymbol{S}_k = [S_{k,0}, S_{k,1}, \cdots, S_{k, N_c-1}]^T$. The input data stream is first fed to the channel encoder, mapped to the symbol vector $\boldsymbol{S}_k$, and then fed to the proposed symbol-level secure precoder. 
	
	\begin{figure}[t]
		\centering
		\includegraphics[width=0.9\columnwidth]{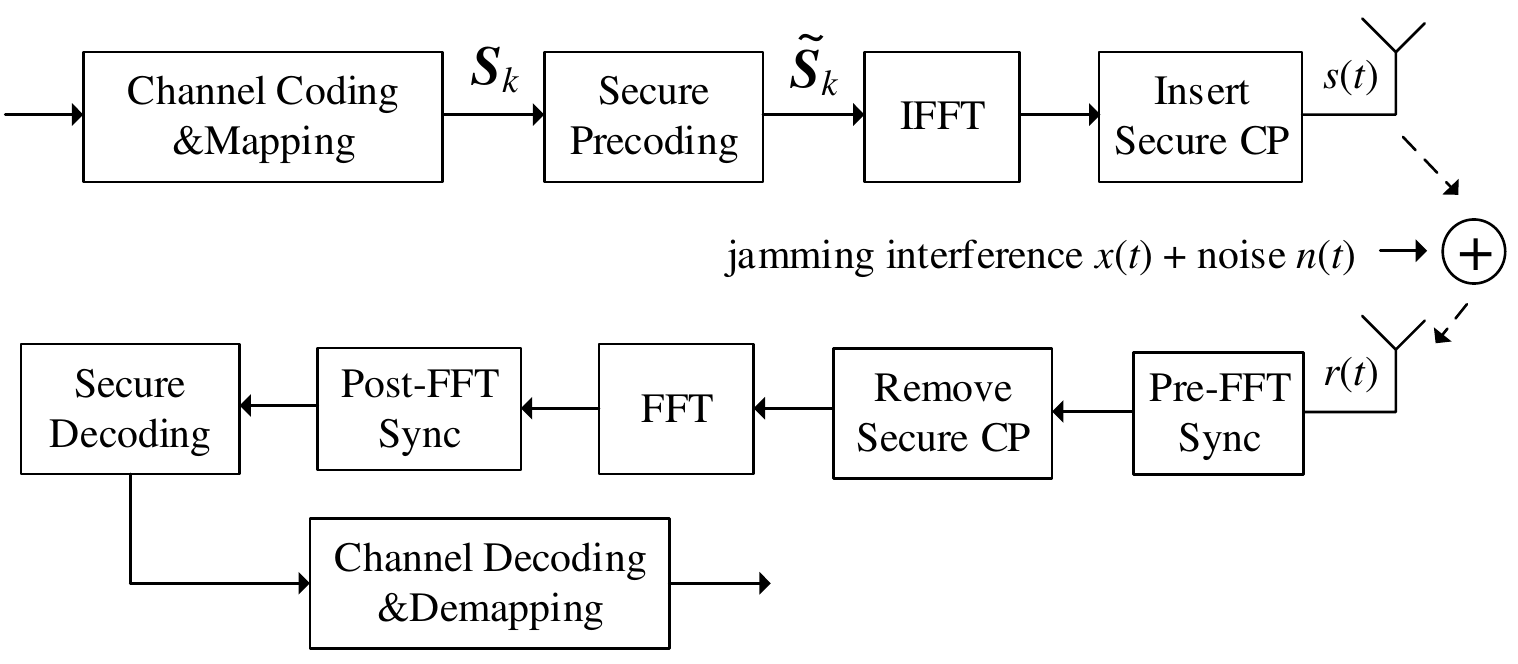}
		\caption{Anti-jamming OFDM design through secure precoding and decoding.}\label{Fig:Sys}
	\end{figure}
	
	As pointed out in \cite{Ahlswede1978, Blackwell1960AMS, Zhang2013TWC1, Zhang2013TWC2}, a key enabling factor for reliable communication under disguised jamming is to introduce shared randomness between the transmitter and receiver, such that the symmetry between the authorized signal and the jamming interference is broken. To maintain full spectral efficiency of the traditional OFDM system, the precoding is performed by multiplying an \emph{invertible} $N_c \times N_c$ precoding matrix $\boldsymbol{P}_k$ to the symbol vector $\boldsymbol{S}_k$, i.e.,
	\begin{equation}
	\tilde{\boldsymbol{S}}_k = \boldsymbol{P}_k \boldsymbol{S}_k.
	\end{equation} 
	
	\begin{figure}[t]
		\centering
		\includegraphics[width=0.9\columnwidth]{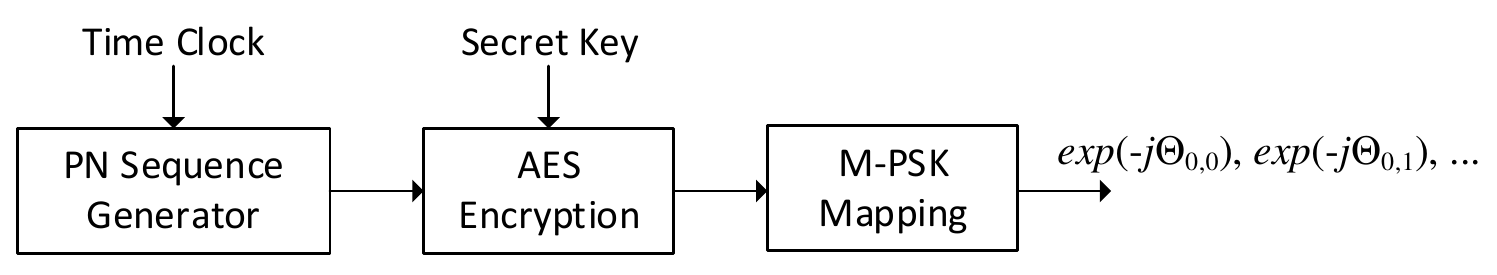}
		\caption{Secure phase shift generator} \label{Fig:AES}
	\end{figure}

	In this paper, we design the precoding matrix $\boldsymbol{P}_k$ to be a diagonal matrix as
	\begin{equation}
	\boldsymbol{P}_k = diag(e^{-j\Theta_{k,0}}, e^{-j\Theta_{k,1}}, \cdots, e^{-j\Theta_{k,N_c - 1}}).
	\end{equation}
	That is, a random phase shift is applied to each transmitted symbol; more specifically, for $i = 0, 1, \cdots, N_c -1$ and $k \in \mathbb{Z}$, a random phase shift $-\Theta_{k,i}$ is applied to the symbol transmitted on the $i$-th carrier of the $k$-th OFDM block. The phase shift changes randomly and independently across sub-carriers and OFDM blocks, and is encrypted so that the jammer has no access to it. More specifically,  $\{\Theta_{k,i}\}$ is generated through a secure phase shift generator as shown in Fig. \ref{Fig:AES}. The secure phase shift generator consists of three parts: (i) a pseudo-noise (PN) sequence generator; (ii) an Advanced Encryption Standard (AES) \cite{AES} encryption module; and (iii) an $M$-PSK mapper. 
	
	The \emph{PN sequence generator} generates a pseudo-random sequence, which is then encrypted with AES. The encrypted sequence is further converted to PSK symbols using an $M$-PSK mapper, where $M$ is a power of $2$, and every $\log_2 M$ bits are converted to a PSK symbol. To facilitate the synchronization process, the PN sequence generator is initialized in the following way: each party is equipped with a global time clock, and the PN sequence generators are reinitialized at fixed intervals. The new state for reinitialization, for example, can be the elapsed time after a specific reference epoch in seconds for the time being, which is public. As the initial state changes with each reinitialization, no repeated PN sequence will be generated.  The security, as well as the randomness of the generated phase shift sequence, are guaranteed by the AES encryption algorithm \cite{AES}, for which the secret encryption key is only shared between the authorized transmitter and receiver.  Hence, the phase shift sequence is random and unaccessible for the jammer. The resulted symbol vector from the secure precoding, $\tilde{\boldsymbol{S}}_k$, is then used to generate the body of OFDM block through IFFT, whose duration is $T_s$. 
		
		In OFDM transceiver design, the synchronization module plays a crucial role: OFDM requires both accurate time and frequency synchronization to avoid inter-symbol interference (ISI) and inter-carrier interference (ICI). In SP-OFDM, we propose a cyclic prefix (CP) based synchronization algorithm, as in traditional OFDM. However, SP-OFDM differs in that its CP is encrypted to ensure the security  under disguised jamming.  
	
	\subsection{Cyclic Prefix Design with Secure Precoding}	
	In traditional OFDM, CP has three major functions: (i) eliminating the ISI between neighboring blocks; (ii) converting the linear convolution of OFDM block body with the channel impulse response into circular convolution under multi-path channel fading; and (iii) eliminating the ICI introduced by multipath propagation. As CP is a copy of the tail of OFDM block body, we can calculate the correlation between CP and the tail of OFDM block to estimate the starting point of each OFDM block \cite{Beek1997TSP} when disguised jamming is absent. 
	
	However, as to be shown in Section \ref{Sec:Sync}, the traditional CP based synchronization is fragile under disguised jamming. As shown in Fig. \ref{Fig:ex}, to ensure the robustness of synchronization, in SP-OFDM, we apply a secure phase shift to part of the CP for each OFDM block. More specifically, the CP of each OFDM block is divided into two parts: for the first part, with a duration of $T_{CP, 1}$, a secure phase shift is applied to the signal. We name this part of CP as CP1; while for the second part, which is of length $T_{CP, 2}$, no special processing is applied. We name the second part as CP2. CP1 is used for effective synchronization under disguised jamming; CP2 maintains the functions of the original CP. To avoid ISI and ICI, both $T_{CP, 1}$ and $T_{CP, 2}$ are chosen to be longer than the maximum delay spread of the channel. 
	
	\begin{figure}
		\centering
		\includegraphics[width=0.9\columnwidth]{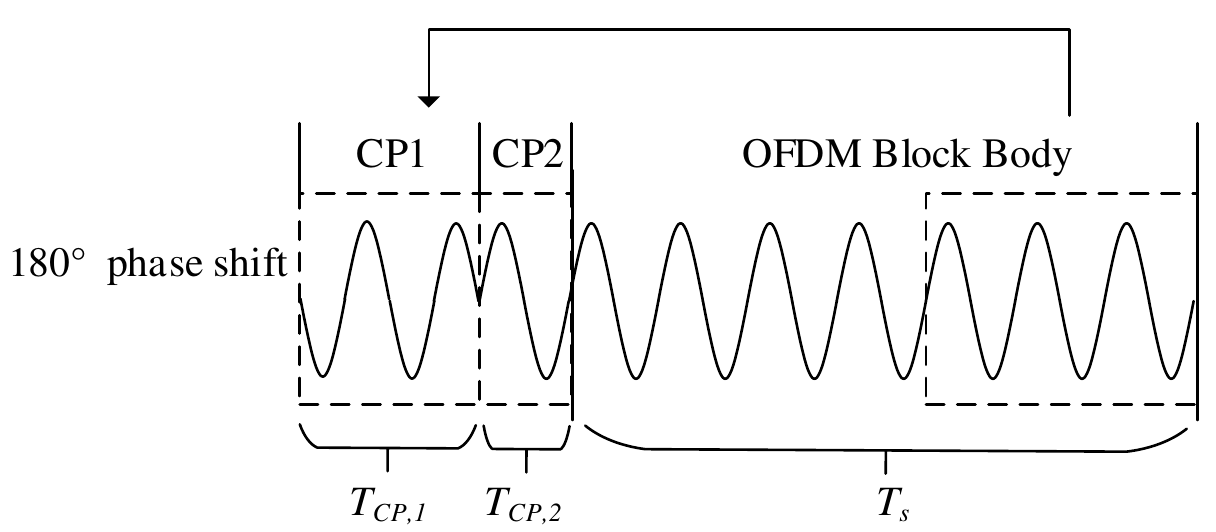}
		\caption{An OFDM waveform example with secure cyclic prefix, illustrated with a $180^\circ$ phase shift on CP1.}\label{Fig:ex}
	\end{figure}

	\begin{figure}
	\centering
	\includegraphics[width=0.9\columnwidth]{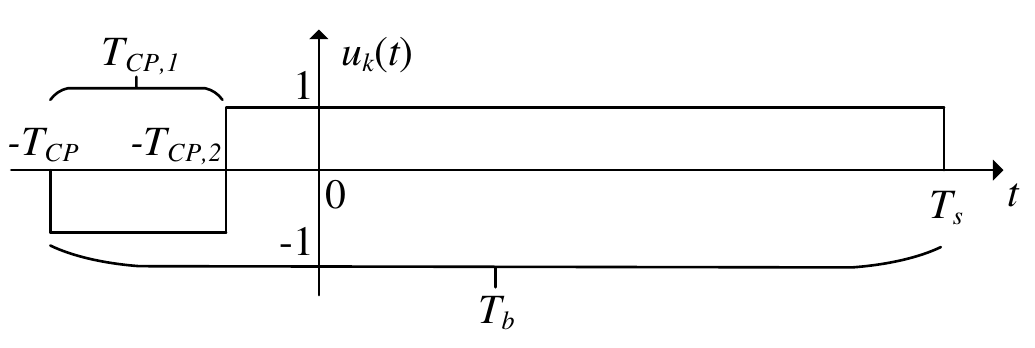}
	\caption{The waveform of $u_k(t)$ with $C_k = -1$.}\label{Fig:ut}
\end{figure}
	
	To ensure the security, the phase shift applied to CP1 is encrypted and varies for each OFDM block. The corresponding secure phase shift sequence can be generated using the same phase shift generator proposed in Fig. \ref{Fig:AES}, with a much lower generation rate, since only one phase shift symbol is needed per OFDM block.	Let $s_k(t)$ denote signal of the $k$-th OFDM block in the time domain by aligning the beginning of the OFDM block body at $t = 0$, and $C_k$ denote the phase shift symbol applied to its CP1; let $u (t)$ be the unit step function, $T_{CP} = T_{CP, 1} + T_{CP, 2}$ and $T_s$ denote the duration of OFDM block body. Define function $u_k(t)$ as
	\begin{equation}
	\small
	u_k(t) \overset{\triangle}{=} C_k [u (t + T_{CP}) - u (t + T_{CP, 2})]  + u (t + T_{CP, 2}) - u (t - T_s).
	\end{equation}
	An example of $u_k(t)$ with $C_k = -1$ is plotted in Fig. \ref{Fig:ut}. For SP-OFDM with secure CP, $s_k(t)$ can be expressed as
	\begin{equation}
	s_k(t) = \frac{1}{N_c}\sum_{i=0}^{N_c - 1} \tilde{S}_{k, i} e^{j\frac{2\pi i}{T_s} t} u_k(t),
	\end{equation}     
	where $\tilde{S}_{k, i} = S_{k, i} e^{-j\Theta_{k,i}}$. Let  $T_b = T_s + T_{CP}$ denote the duration of an OFDM block. Then the entire OFDM signal in the time domain can be expressed as
	\begin{equation}
	s(t) = \sum_{k = -\infty}^{\infty} s_k(t - k T_b).
	\end{equation} 
	
	Even though the receiver can generate identical phase shift sequences used in CP1 generation from the design of Fig. \ref{Fig:AES}, there will still be an offset between the two generated sequences considering the delays in communication and the mismatch between the time clocks. Let $C_k$ and $\tilde{C}_k$ denote the phase shift symbols generated at the transmitter and receiver respectively, and we have
	\begin{equation}
	C_k = \tilde{C}_{k + k_0}, \forall k.
	\end{equation}    
	Since the phase shift sequences are generated from the global time clock, the offset $k_0$ is bounded. The offset $k_0$ can be estimated by the synchronization module at the receiver. Note that synchronization is needed for the precoding matrix sequence $\boldsymbol{P}_k$ as well; for the ease of synchronization, we pair the CP phase shift symbol $C_k$ with the precoding matrix $\boldsymbol{P}_k$ for each OFDM block $k$; that is, for each CP phase shift symbol generated, we generate $N_c$ phase shift symbols in parallel as the sub-carrier phase shifts. In this way, the two phase shift sequences are synchronized, in the sense that once the synchronization on the CP phase shift sequence is obtained, the synchronization on the precoding matrices is achieved automatically.

	\subsection{Receiver Design with Secure Decoding}
	
	We consider an additive white Gaussian noise (AWGN) channel under hostile jamming. The transmitted OFDM signal is subject to an AWGN term, denoted by $n(t)$, and an additive jamming interference $x(t)$. The received OFDM signal can be expressed as
	\begin{equation}
	r(t) = s(t - t_0) e^{j(\omega_0 t + \phi_0)} + x(t) + n(t),
	\end{equation}
	where $t_0$, $\omega_0$ and $\phi_0$ denote the time, frequency and phase offsets between the transmitter and receiver, respectively. Without loss of generality, we can assume that $t_0 \in [0, T_b)$. 
	
	As in the traditional OFDM system, the synchronization module of SP-OFDM consists of two stages: a \emph{pre-FFT synchronization}, which makes use of the correlation between the secure CP and the OFDM body tail to roughly estimate the offsets, and a \emph{post-FFT synchronization}, which makes use of the pilot symbols inserted to certain sub-carriers to obtain a fine estimation. The phase shift offset $k_0$ is also estimated in the pre-FFT stage. The detailed algorithm and analysis on the synchronization of SP-OFDM will be presented in Section \ref{Sec:Sync}.   	
	
	The demodulation module at the receiver will crop the CP to obtain the body of each OFDM block, and apply FFT to obtain the frequency component at each sub-carrier. Under perfect synchronization, the received signal of the $k$-th OFDM block body can be expressed as
	\begin{equation}
		r_k(t) = s_k(t) + x_k(t) + n_k(t),~~ t \in [0, T_s),
	\end{equation}     
	where $x_k(t)$ and $n_k(t)$ are the jamming interference and noise overlaid on the $k$-th OFDM block, respectively. The frequency components of jamming and noise can be calculated as
	\begin{equation}
		J_{k, i} = \sum_{m = 0}^{N_c - 1} x_{k}(\frac{mT_s}{N_c}) e^{-j\frac{2\pi i}{N_c}m}, i = 0, 1, \cdots,  N_c-1,
	\end{equation}
	\begin{equation}
		\bar{N}_{k,i} =  \sum_{m = 0}^{N_c - 1} n_{k}(\frac{mT_s}{N_c}) e^{-j\frac{2\pi i}{N_c}m}, i = 0, 1, \cdots,  N_c-1,
	\end{equation}
	where $\frac{T_s}{N_c}$ is the sampling interval. For an AWGN channel, $\bar{N}_{k,i}$'s are i.i.d. circularly symmetric complex Gaussian random variables with variance $\sigma^2$.	After applying FFT to the received signal, a symbol vector $\tilde{\boldsymbol{R}}_k = [\tilde{R}_{k,0}, \tilde{R}_{k,1}, \cdots, \tilde{R}_{k,N_c-1}]^T$ is obtained for the $k$-th transmitted OFDM block. That is,
	\begin{equation}
	\tilde{\boldsymbol{R}}_k = \boldsymbol{P}_k \boldsymbol{S}_k + \boldsymbol{J}_{k} + \bar{\boldsymbol{N}}_{k}.
	\end{equation}
	where
	\begin{equation}
	\boldsymbol{J}_{k} = [J_{k,0}, J_{k,1}, \cdots, J_{k,N_c-1}]^T,
	\end{equation}
	and
	\begin{equation}
	\bar{\boldsymbol{N}}_{k} = [\bar{N}_{k,0}, \bar{N}_{k,1}, \cdots, \bar{N}_{k,N_c-1}]^T.
	\end{equation}
	
	The secure decoding module multiplies the inverse matrix of $\boldsymbol{P}_k$ to $\tilde{\boldsymbol{R}}_k$, which results in the symbol vector
	\begin{equation}
	\boldsymbol{R}_k = \boldsymbol{S}_k + \boldsymbol{P}_k^{-1} \boldsymbol{J}_{k} + \boldsymbol{P}_k^{-1} \bar{\boldsymbol{N}}_{k}, 
	\end{equation}
	where  $\boldsymbol{R}_k = [R_{k,0}, R_{k,1}, \cdots, R_{k,N_c-1}]^T$, with 	
	\begin{equation}
	R_{k,i} = S_{k, i} + e^{j\Theta_{k,i}} J_{k, i} + N_{k,i}, \label{Eq:chn_model}
	\end{equation}	
	where $N_{k,i} = e^{j\Theta_{k,i}} \bar{N}_{k, i}$, and $\Theta_{k,i}$ is uniformly distributed over $\{ \frac{2\pi i}{M} \mid i =0, 1, \cdots, M-1  \}$.  Note that for any circularly symmetric Gaussian random variable $N$, $e^{j\theta}N$ and $N$ have the same distribution for any angle $\theta$ \cite[p66]{Barry2003}; that is, $N_{k,i}$ is still a circular symmetric complex Gaussian random variable of zero-mean and variance $\sigma^2$. Taking the delay in the communication system into consideration, in this paper, we assume that the authorized user and the jammer do not have pre-knowledge on the sequence of each other. 
		
	
	\section{Synchronization in SP-OFDM under Disguised Jamming}  \label{Sec:Sync}
	In this section, first, we show the vulnerability of the synchronization process in tradition OFDM under disguised jamming attacks; then we propose the synchronization algorithm of SP-OFDM and prove its effectiveness under hostile jamming.
	
	 In modern OFDM systems, there are generally two kinds of approaches to achieve signal synchronization: (i) making use of the correlation between the CP and the tail of each OFDM block \cite{Beek1997TSP}; or (ii) inserting certain training symbols in every OFDM frame \cite{IEEE80211}. However, neither of these two approaches is robust under malicious jamming, especially disguised jamming, where the jammer modulates the inference with OFDM and deceive the receiver into synchronizing with the disguised jamming instead of the legitimate signal. For the training sequence based synchronization approach, even if the training sequence is not public, there is still a chance for the jammer to eavesdrop on the training sequence, and then generate the OFDM modulated disguised jamming with the true training sequence. 
	 
	 \begin{figure}
	 	\centering
	 	\includegraphics[width=0.8\columnwidth]{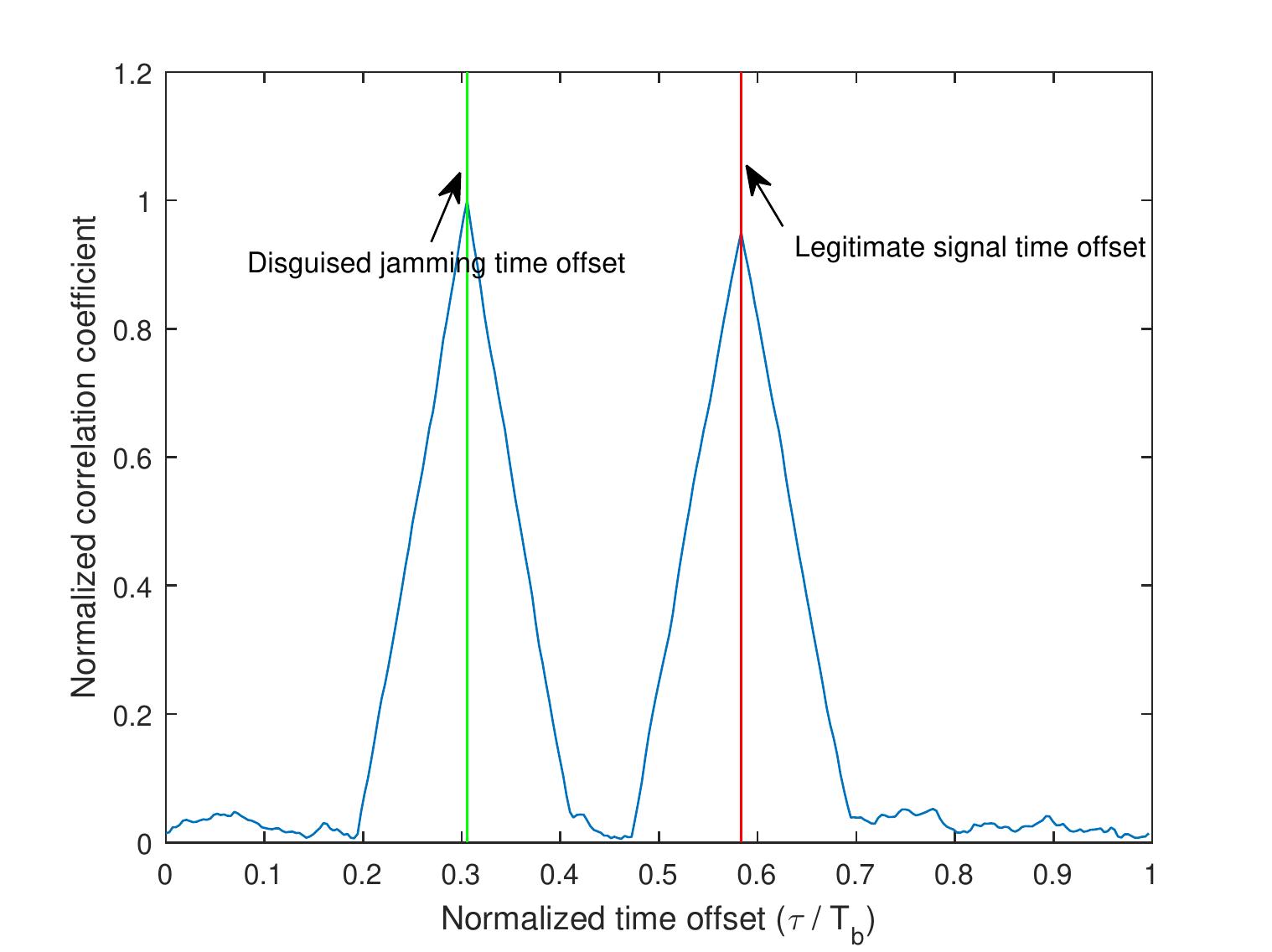}
	 	\caption{Correlation coefficients of the original OFDM under disguised jamming.} \label{Fig:noPrecoding}
	 \end{figure}
	 
	 \textbf{Synchronization of traditional OFDM under disguised jamming:} To demonstrate the damage of disguised jamming, we calculate the CP based correlation coefficients of the traditional OFDM signal  at different time offsets in the AWGN channel under an OFDM modulated disguised jamming. We average the correlation coefficients over multiple OFDM blocks, and the result is shown in Fig. \ref{Fig:noPrecoding}. \emph{Without proper encryption applied to the signal, the legitimate signal and the jamming interference are completely symmetric; we can observe peaks of the correlation coefficients at two different time offsets, one corresponding to that of the legitimate signal and the other corresponding to that of the disguised jamming.} If the jamming power is the same as the signal power, then the probability that the receiver chooses to synchronize with jamming is $50\%$. Obviously, a complete communication failure occurs when the receiver chooses to synchronize with the disguised jamming instead of the legitimate signal.

	 To address this problem, in the synchronization algorithm of SP-OFDM, we apply encrypted phase shifts to the sub-carriers and CP. For the ease of analysis, in the following, we consider an AWGN channel model; the effectiveness of the proposed algorithm in multi-path fading channels will be verified through numerical analysis in Section \ref{Sec:Simulation}. Even though our goal is to guarantee the robustness of SP-OFDM under disguised jamming, in the following analysis, we do not assume any specific form on the jamming interference $x(t)$, that is, we prove the robustness of our algorithm under any form of jamming attacks. Without loss of generality, we denote the combined term of jamming and noise as $z(t) = x(t) + n(t)$, and the received signal can be expressed as
	 \begin{equation}
	 r(t) = s(t - t_0) e^{j(\omega_0 t + \phi_0)} + z(t).
	 \end{equation}
	 
	 \subsection{Pre-FFT Synchronization} 
	 In the pre-FFT stage, we estimate the encrypted phase shift sequence offset $k_0$, time offset $t_0$ and the fractional part of $w_0 T_s / 2\pi$ for frequency offset $w_0$. Since the phase shift sequence $C_k$ is generated from the global time clock, the receiver has rough bounds on $k_0$ relative to the arrival time of the signal. We denote the finite candidate set of offset $k_0$ by $\mathcal{K}$.
	 
	 In the traditional OFDM system, the CP correlation based synchronization algorithm is derived from the maximum-likelihood (ML) rule \cite{Beek1997TSP,Lv2003Globecom}. However, since the jamming distribution is unspecified in our case, the ML rule is not applicable. Instead, we prove the robustness of the synchronization algorithm of SP-OFDM using the Chebychev inequality \cite[Theorem 5.11]{Klenke2008}. 
	 
	 In the pre-FFT stage, the receiver calculates the following correlation coefficient
	 \begin{equation}
	 	Y_k(\tau, d) \overset{\triangle}{=} \int_{\tau - T_{CP} + k T_b}^{\tau - T_{CP,2} + k T_b} r(t) r^*(t + T_s) \tilde{C}^*_{k+d} \text{d}t,~ k \in \mathbb{Z}^*,  \label{Eq:Y} 
	 \end{equation}
	for $\tau \in [0, T_b), d \in \mathcal{K}$. We have the following proposition on $Y_k(\tau, d)$, whose proof is given in the appendix.
	
	\begin{proposition}\label{Prop:pre}
		If the fourth moment of $z(t)$ is bounded for any time instant $t$, i.e., $\mathbb{E}\{ |z(t)|^4 \} < \infty, \forall t \in \mathbb{R}$, then as $K \rightarrow +\infty$, we have
		\begin{equation}
		\small
		\frac{1}{K} \sum_{k = 0}^{K-1} Y_k(\tau, d) = \left\lbrace 
		\begin{array}{rl}
		\frac{P_S}{N_c} v(\tau + T_b - t_0) e^{-j \omega_0 T_s },& d = k_0 - 1, \\
		\frac{P_S}{N_c} v(\tau - t_0) e^{-j \omega_0 T_s },& d = k_0, \\
		\frac{P_S}{N_c} v(\tau - T_b - t_0) e^{-j \omega_0 T_s },& d = k_0 + 1, \\
		0,& otherwise,				 
		\end{array}
		\right.\label{Eq:pre}
		\end{equation}
		 almost surely (a.s.), where
		 	\begin{equation}
		 v(\tau) \overset{\triangle}{=}   \left\lbrace \begin{array}{rl}
		 \tau + T_{CP, 1}  , & -T_{CP, 1} \leq \tau < 0, \\
		 T_{CP, 1} - \tau, & 0 \leq \tau <  T_{CP, 1}, \\
		 0, &  otherwise,
		 \end{array}
		 \right. 
		 \end{equation}
	 and $P_S$ is the average symbol power of constellation $\Phi$.
	\end{proposition}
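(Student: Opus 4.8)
The plan is to substitute the received signal $r(t) = g(t) + z(t)$, with $g(t) \overset{\triangle}{=} s(t-t_0)e^{j(\omega_0 t + \phi_0)}$, into the definition of $Y_k(\tau,d)$ and expand the integrand $r(t)r^*(t+T_s)$ into four pieces: the signal--signal product $g(t)g^*(t+T_s)$, the two cross products $g(t)z^*(t+T_s)$ and $z(t)g^*(t+T_s)$, and the interference--interference product $z(t)z^*(t+T_s)$. First I would treat the deterministic signal--signal term. The key structural fact is that within the CP1 region of block $m$ the waveform satisfies $s_m(t) = C_m\, s_m(t+T_s)$, which follows from the definition of $u_k(t)$ and the identity $e^{j 2\pi i} = 1$; combined with the offsets one gets $g(t)g^*(t+T_s) = s(t-t_0)s^*(t+T_s-t_0)e^{-j\omega_0 T_s}$, where the $\phi_0$ and $\omega_0 t$ dependence cancels and leaves precisely the stated factor $e^{-j\omega_0 T_s}$. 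Multiplying by $\tilde C^*_{k+d}$ and using $C_k = \tilde C_{k+k_0}$, the phase factor $C_m \tilde C^*_{k+d}$ collapses to unity exactly when $d = k_0$ and the window overlaps the CP1 of block $k$, and becomes a product of two independent zero--mean $M$-PSK symbols otherwise.

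Next I would carry out the averaging $\frac{1}{K}\sum_k$. For the signal--signal term, taking expectation over the i.i.d. data symbols and sub-carrier phases yields the average power $\frac{P_S}{N_c}$, while the length of the overlap between the integration window and the true CP1 location, viewed as a function of $\tau$, traces out the triangle-shaped function $v(\cdot)$; the three surviving cases correspond to the window catching the CP1 of block $k$ and of its two neighbours, producing the shifts $v(\tau - t_0)$ and $v(\tau \mp T_b - t_0)$ for $d=k_0$ and $d = k_0 \pm 1$ respectively, and vanishing contributions for every other $d$. For the two cross terms and the interference--interference term, I would use that $z(t)$ is independent of the secure phase sequence and that each $\tilde C_{k+d}$ (and each sub-carrier phase) is uniform over the $M$-PSK alphabet hence zero-mean; the expectation then factorizes and every such term has zero mean, \emph{independently} of the unknown statistics of $z$.

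The final and most delicate step is upgrading convergence of the means to almost-sure convergence of the random averages. Here I would exploit the same phase randomization: for $k \neq k'$ the factors $\tilde C_{k+d}$ and $\tilde C_{k'+d}$ are independent and zero-mean, so $\mathrm{Cov}(Y_k, Y_{k'}) = 0$ even though $z(t)$ may be arbitrarily correlated across blocks. The bounded fourth-moment hypothesis $\mathbb{E}\{|z(t)|^4\} < \infty$, together with the boundedness of the finite constellation, then guarantees that $\mathrm{Var}(Y_k)$ is finite and uniformly bounded in $k$. With $\{Y_k\}$ pairwise uncorrelated and of uniformly bounded variance, I would invoke the strong law of large numbers for uncorrelated sequences --- established through the Chebyshev inequality applied along the subsequence $K = n^2$, a Borel--Cantelli argument, and a maximal bound to fill the gaps between consecutive squares --- to conclude $\frac{1}{K}\sum_{k=0}^{K-1}(Y_k - \mathbb{E}Y_k) \to 0$ almost surely, which combined with the computed limit of the means gives \eqref{Eq:pre}.

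The main obstacle I anticipate is precisely this last step: because no structure is assumed on the jamming $z(t)$ beyond a bounded fourth moment, one cannot rely on any mixing or decorrelation of $z$ itself. The crux is therefore recognizing that the secure, independent, zero-mean phase symbols $\tilde C_{k+d}$ manufacture the needed decorrelation of the $Y_k$ across blocks, reducing an otherwise intractable dependent-averaging problem to a textbook SLLN for uncorrelated variables. A secondary bookkeeping difficulty is correctly enumerating which blocks' CP1 regions the integration window can overlap as $\tau$ ranges over $[0,T_b)$ and $t_0$ over $[0,T_b)$, so as to recover exactly the three shifted triangles and to confirm that no fourth case survives.
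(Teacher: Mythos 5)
Your proposal follows essentially the same route as the paper: the same four-term decomposition of $r(t)r^*(t+T_s)$, the same computation of the signal--signal expectation yielding the triangle function $v(\cdot)$ via the CP1 overlap, the same use of the secure phase symbols to kill the cross and jamming--jamming terms in expectation, and the same Chebyshev-based variance argument for almost-sure convergence (the paper is terser on the subsequence/Borel--Cantelli step, which you spell out). The one small imprecision is your claim that $\mathrm{Cov}(Y_k,Y_{k'})=0$ for all $k\neq k'$: because the integration window and the signal $s(t-t_0)$ involve blocks $k-1,k,k+1$, adjacent terms need not be uncorrelated, and the paper accordingly only asserts decorrelation for $|k_1-k_2|>1$ --- a $1$-dependence that still gives $O(1/K)$ variance and leaves your conclusion intact.
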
 
	
	 Basing on Proposition \ref{Prop:pre}, to estimate $t_0$ and $k_0$, we search for $\tau$ and $d$ which can maximize $|	\frac{1}{K} \sum_{k = 0}^{K-1} Y_k(\tau, d)|$ for some $K$.  Meanwhile, after we obtain  $t_0$ and $k_0$, the phase of the average correlation coefficient $\frac{1}{K} \sum_{k = 0}^{K-1} Y_k(t_0, k_0)$ is 
	\begin{equation}
	-w_0 T_s \mod 2\pi,
	\end{equation}
	where we can estimate the fractional part of $w_0 T_s/2\pi$ as well. In practice, the jamming interference should be peak power bounded considering the constraints in RF, so we can ensure that the fourth moment of $z(t)$ is bounded. The selection of $K$ depends on the power and the form of the jamming interference. In Section \ref{Sec:Simulation}, we will show that under a disguised jamming, SP-OFDM is able to obtain relatively accurate estimation results with 25 to 30 OFDM blocks. 
	
	As in the traditional OFDM, the CP based synchronization is only able to provide a coarse estimation of time offset $t_0$, especially under multi-path fading, and it requires a fine estimation on the time offset at the post-FFT stage. In addition, from (\ref{Eq:after_sync}), it can be seen that even for a very minor estimation error on the carrier frequency, there still may be an essential phase offset. As long as the range of the time estimation error is smaller than the duration of CP2, without loss of generality, we can model the signal after pre-FFT synchronization as    
	\begin{equation}
	r'(t) = s(t - t_0') e^{j( \frac{2\pi (n_0 + \zeta_0)}{T_s}t + \phi_0)} + z'(t), \label{Eq:after_sync}
	\end{equation} 
	where $z'(t)$ is the jamming interference after pre-FFT synchronization, $t_0' \in [0, T_{CP, 2})$ is the remaining time offset,   $2\pi (n_0 + \zeta_0)/T_s$ is the remaining frequency offset, $n_0$ is an integer and $|\zeta_0| \ll 1$.
	
	\subsection{Post-FFT Synchronization}
	In this stage, we first estimate $n_0 + \zeta_0$  after demodulating the synchronized signal $r'(t)$ in (\ref{Eq:after_sync}) using FFT. Suppose $n_0$ satisfies
	\begin{equation}
	N_l \leq n_0 \leq N_u,
	\end{equation}
	where integers $N_l$ and $N_u$ are determined by the maximal frequency offset between the transmitter and receiver. Basing on (\ref{Eq:after_sync}), to demodulate the $k$-th OFDM block, the receiver applies FFT to signal $r'(t)$ within interval $[kT_b, kT_b + T_s)$. The received signal of $k$-th OFDM block after alignment can be expressed as
	\begin{equation}
	r'_{k}(t) = s_k(t - t_0') e^{j(\frac{2\pi (n_0 + \zeta_0)}{T_s}t + \phi_k)} + z'_{k}(t),~~ t \in [0, T_s),
	\end{equation} 
	where
	\begin{equation}
	\phi_k = \phi_0 + \frac{2\pi (n_0 + \zeta_0) T_b}{T_s}k,
	\end{equation}
	and
	\begin{equation}
	z'_{k}(t) = z'(t + kT_b).
	\end{equation}
	
	Considering the frequency offset $n_0$, the receiver samples the received signal with a sampling frequency $\frac{N_c + N_u - N_l}{T_s}$. Let $N_c' \overset{\triangle}{=} N_c + N_u - N_l$. For $0 \leq i < N_c'$, the FFT applied to $r'_k(t)$  can be expressed as
	\begin{align}
	\!\!R_k(i) &= \sum_{m = 0}^{N_c'-1} r_k'( \frac{mT_s}{N_c'}) e^{-j\frac{2\pi i}{N_c' }m} \nonumber \\ 
	&= \frac{e^{j\phi_k}}{N_c} \sum_{i'=0}^{N_c'-1} \tilde{S}_{k, i'}  \frac{e^{-j\frac{2\pi t_0'}{T_s} i'}(1 - e^{j2\pi \zeta_0})}{1 - e^{j\frac{2\pi (n_0 + \zeta_0 + i' - i)}{N_c'}}} + Z'_{k}(i), \label{Eq:FFT1} 
	\end{align}
	where
	\begin{equation}
	Z'_{k}(i) = \sum_{m = 0}^{N_c'-1} z'_k(\frac{mT_s}{N_c'}) e^{-j\frac{2\pi i}{N_c'}m}~.
	\end{equation}
	Since we assume $|\zeta_0| \ll 1$, for $0 \leq i < N_c'$, we can neglect the ICI in (\ref{Eq:FFT1}) and approximate $R_k(i)$ as
	\begin{equation}
	R_k(i) = \frac{N_c'}{N_c}e^{j\phi_k}  e^{-j\frac{2\pi t_0'}{T_s} [(i-n_0)\!\!\! \mod N_c']} \tilde{S}'_{k, i-n_0} + Z'_{k}(i),
	\end{equation}
	where
	\begin{equation}
	\tilde{S}'_{k, i} = \left\lbrace \begin{array}{rl}
	\tilde{S}_{k,~(i\!\!\! \mod N_c' )},& 0 \leq i\!\!\! \mod N_c' < N_c,\\
	0,& \text{otherwise}.
	\end{array}
	\right. 
	\end{equation} 
	
	The post-FFT synchronization generally utilizes the pilot symbols inserted at certain sub-carriers. For the ease of analysis, we assume a pilot symbol $\boldsymbol{p}$ is placed at sub-carrier $i_p$ of each OFDM block. Note that, as the precoding matrix sequence is synchronized with the CP phase shift sequence, the precoding matrix sequence is synchronized at the receiver after pre-FFT synchronization.  We calculate the following correlation coefficients for each OFDM block $k$:
	\begin{equation}
		\Gamma_{k}(i) \overset{\triangle}{=} R_{k}(i) R_{k+1}^*(i) e^{j(\Theta_{k, i_p} - \Theta_{k+1, i_p})}.
	\end{equation}
	We have the following proposition on $\Gamma_{k}(i)$.
	
	\begin{proposition}\label{Prop:post1}
		If the fourth moment of $z(t)$ is bounded for any time $t$, then as $K \rightarrow +\infty$, we have
	\begin{align}
	&\!\frac{1}{K}\sum_{k = 0}^{K-1} \Gamma_{k}(i) \nonumber \\
	 =& \left\lbrace 
	\begin{array}{rl}
	\!\!\!\left(\frac{N_c'}{N_c}\right)^2\! e^{j \frac{2\pi (n_0 + \zeta_0) T_b}{T_s}} |\boldsymbol{p}|^2,& i = n_0 + i_p\!\!\!\! \mod N_c'~, \\
	0, & \text{otherwise},
	\end{array}
	\right. a.s.. \label{Eq:Gamma_as}
	\end{align} 
	\end{proposition}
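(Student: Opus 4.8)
The plan is to substitute the post-FFT expressions for $R_k(i)$ and $R_{k+1}(i)$ into the definition of $\Gamma_k(i)$ and expand the product into four groups: a signal$\times$signal term, two signal$\times$(jamming-plus-noise) cross terms, and a (jamming-plus-noise)$\times$(jamming-plus-noise) term, where I write $Z'_k(i)$ for the combined jamming-plus-noise component. The whole argument then splits into two tasks: (i) identifying the conditional mean of each group as a function of $k$, and (ii) showing that the empirical average $\frac{1}{K}\sum_{k=0}^{K-1}$ of each group converges almost surely to that mean. The engine behind the result is the phase-correction factor $e^{j(\Theta_{k,i_p}-\Theta_{k+1,i_p})}$ built into $\Gamma_k(i)$: it is designed to undo the secure precoding phase on the pilot sub-carrier while leaving every other random phase uncancelled.

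First I would treat the signal$\times$signal term. Using $\tilde{S}'_{k,i-n_0}=S_{k,i-n_0}e^{-j\Theta_{k,i-n_0}}$, the product of the two signal parts contributes the deterministic gain $\left(\frac{N_c'}{N_c}\right)^2$, the consecutive-block phase $e^{j\frac{2\pi(n_0+\zeta_0)T_b}{T_s}}$ carried by $\phi_{k+1}-\phi_k$, the symbol product $S_{k,i-n_0}S_{k+1,i-n_0}^*$, and the precoding phases $e^{-j(\Theta_{k,i-n_0}-\Theta_{k+1,i-n_0})}$. At $i=n_0+i_p \bmod N_c'$ the relevant sub-carrier is the pilot, so $S_{k,i-n_0}=S_{k+1,i-n_0}=\boldsymbol{p}$ and the precoding phases are exactly $e^{-j(\Theta_{k,i_p}-\Theta_{k+1,i_p})}$, which cancel the correction factor; the summand is then the constant $\left(\frac{N_c'}{N_c}\right)^2 e^{j\frac{2\pi(n_0+\zeta_0)T_b}{T_s}}|\boldsymbol{p}|^2$, so its average equals this value for every $K$. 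When $i\neq n_0+i_p$, the sub-carrier $i-n_0$ carries a data symbol, and the summand retains the four independent phases $\Theta_{k,i-n_0},\Theta_{k+1,i-n_0},\Theta_{k,i_p},\Theta_{k+1,i_p}$; since each $\Theta$ is uniform on the $M$-PSK phase set and $\sum_{m=0}^{M-1}e^{j2\pi m/M}=0$, the conditional mean given the data is zero.

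Next I would dispose of the two cross terms and the $Z'\times Z'$ term. In each of them the correction factor contributes an \emph{uncancelled} pilot phase, $e^{j\Theta_{k,i_p}}$ or $e^{-j\Theta_{k+1,i_p}}$ (in the $Z'\times Z'$ term both appear). Because the jammer has no access to the secret key, $Z'_k(i)$ is independent of the precoding phases, and these uniform $M$-PSK phases are zero-mean; hence each of these three groups has mean zero as well. Consequently the only surviving contribution is the pilot signal$\times$signal term, which already matches the claimed right-hand side.

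The remaining and main step is to upgrade ``mean zero'' to almost-sure convergence of the empirical average to its mean, and this is where the real difficulty lies: the summands are neither independent nor identically distributed, since the jamming may be arbitrarily correlated across OFDM blocks and the data symbols may be correlated through the channel code, so a classical strong law does not apply directly. I would follow the same route as the proof of Proposition \ref{Prop:pre}: the hypothesis $\mathbb{E}\{|z(t)|^4\}<\infty$ propagates to a uniform bound on the fourth moment of $Z'_k(i)$, and hence of the centered summands $\Gamma_k(i)-\mathbb{E}\Gamma_k(i)$. Expanding $\mathbb{E}\{|\sum_{k=0}^{K-1}(\Gamma_k(i)-\mathbb{E}\Gamma_k(i))|^4\}$ and using that the independent, zero-mean $M$-PSK phases annihilate all index combinations except those in which the block indices pair up, this fourth moment is $O(K^2)$; a fourth-order Chebyshev bound then gives $\mathbb{P}(|\frac{1}{K}\sum(\Gamma_k-\mathbb{E}\Gamma_k)|>\epsilon)=O(K^{-2})$, which is summable in $K$, so Borel--Cantelli yields almost-sure convergence directly. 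The bounded-fourth-moment assumption is thus exactly the hypothesis that tames the correlations and keeps the covariance sums from blowing up, and I expect this Chebyshev--Borel--Cantelli bookkeeping to be the most technical part of the argument.
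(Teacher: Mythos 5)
Your proposal follows the same overall route as the paper's (very terse) proof: expand $\Gamma_k(i)$ into the signal$\times$signal, two cross, and $Z'\times Z'$ groups, use the independence and zero mean of the uncancelled $M$-PSK phases to identify the conditional mean of each group (only the pilot signal$\times$signal term survives, and the correction factor $e^{j(\Theta_{k,i_p}-\Theta_{k+1,i_p})}$ cancels exactly there), and then argue concentration of the empirical average. Where you genuinely diverge is the concentration step. The paper asserts that the variance of $\frac{1}{K}\sum_k\Gamma_k(i)$ tends to zero and invokes Chebyshev, ``following the approach in the pre-FFT analysis''; taken literally, a variance of order $1/K$ plus Chebyshev yields tail bounds of order $1/K$, which are not summable and hence only give convergence in probability, not the almost-sure convergence stated in the proposition. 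Your fourth-moment expansion --- using the independent phases to kill all index combinations except pairings, obtaining an $O(K^2)$ bound on $\mathbb{E}\{|\sum_k(\Gamma_k-\mathbb{E}\Gamma_k)|^4\}$ and then applying Borel--Cantelli --- is the standard repair and actually delivers the a.s.\ claim under exactly the stated hypothesis $\mathbb{E}\{|z(t)|^4\}<\infty$ (which is also what makes the fourth-moment bound finite). One small caveat: when you argue that the phase factors force the block indices to pair up, you should note that each $\Theta$ appears with net integer coefficient in $\{-2,\dots,2\}$ and that $\mathbb{E}\{e^{jm\Theta}\}=0$ unless $m\equiv 0 \pmod M$, so the pairing argument requires $M>2$ (satisfied here, e.g.\ $M=16$). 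With that noted, your write-up is correct and, on the concentration step, more rigorous than what the paper records.
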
 
	\begin{proof}
		Note that $\Gamma_{k}(i)$ can be derived as 
	\begin{align}
	&\Gamma_{k}(i)  = [\left({N_c'}/{N_c}\right)^2 e^{j \frac{2\pi (n_0 + \zeta_0) T_b}{T_s}} \tilde{S}'_{k, i-n_0}	 \tilde{S}'^*_{k+1, i-n_0}  \nonumber\\
	& +  \frac{N_c'}{N_c}e^{j\phi_k} \tilde{S}'_{k, i-n_0}	Z'^*_{k+1}(i) + \frac{N_c'}{N_c}e^{j\phi_{k+1}} \tilde{S}'^*_{k+1, i-n_0} Z'_{k}(i)   \nonumber \\
		& + Z'_{k}(i) Z'^*_{k+1}(i) ] 	e^{j(\Theta_{k, i_p} - \Theta_{k+1, i_p})}. \label{Eq:FFT2}
		\end{align}
		Since the phase shifts $\Theta_{k, i}$'s are independent across the sub-carriers, following the approach in the pre-FFT analysis, we have
		\begin{equation}
		\mathbb{E}\{\Gamma_{k}(i) \}\! = \!\left\lbrace 
		\begin{array}{rl}
		\!\!\!\left(\!\frac{N_c'}{N_c}\right)^2\! e^{j \frac{2\pi (n_0 + \zeta_0) T_b}{T_s}} |\boldsymbol{p}|^2,& i = n_0\! +\! i_p \!\!\!\!\mod N_c', \\
		0, & \text{otherwise}.
		\end{array}
		\right. 
		\end{equation}
		while the variance of $\frac{1}{K}\sum_{k = 0}^{K-1} \Gamma_{k}(i)$ converges to $0$ as $K \rightarrow +\infty$. Therefore (\ref{Eq:Gamma_as}) is obtained accordingly. We skip the details here for brevity.
	\end{proof}
	
	Following Proposition \ref{Prop:post1}, $n_0$ can be estimated by finding the $i$ which maximizes $\frac{1}{K}\sum_{k = 0}^{K-1} \Gamma_{k}(i)$. With the $n_0$ obtained, we  can further estimate the frequenecy estimation error $\zeta_0$ in the pre-FFT stage by evaluating the phase of $\frac{1}{K}\sum_{k = 0}^{K-1} \Gamma_{k}((n_0 + i_p)\!\! \mod N_c')$. 
	
	After $n_0$ is estimated, without loss of generality, we can assume $n_0 = 0$ in the following derivation.  In terms of the time offset $t_0'$, given two pilot symbols $\boldsymbol{p}_1$ and $\boldsymbol{p}_2$ located at sub-carriers $i_{p_1}$ and $i_{p_2}$, respectively, we evaluate the following correlation coefficient for each OFDM block $k$:
	\begin{equation}
	\Upsilon_k(i_{p_1}, i_{p_2}) = R_{k}(i_{p_1}) R_{k}^*(i_{p_2}) \boldsymbol{p}_1^* \boldsymbol{p}_2 e^{j(\Theta_{k, i_{p_1}} - \Theta_{k, i_{p_2}})},
	\end{equation}
	and we have the following proposition.
	\begin{proposition}\label{Prop:post2}
		If the fourth moment of $z(t)$ is bounded for any time $t$, then as $K \rightarrow +\infty$, we have
		\begin{equation}
		\frac{1}{K}\!\sum_{k = 0}^{K-1} \Upsilon_k(i_{p_1}, i_{p_2})\! =\!  
		\left(\frac{N_c'}{N_c}\right)^2 \! e^{-j\frac{2\pi t_0'}{T_s} (i_{p_1} - i_{p_2})} |\boldsymbol{p}_1|^2|\boldsymbol{p}_2|^2, 
		\end{equation}
		a.s.
	\end{proposition}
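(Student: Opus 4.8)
The plan is to reuse the two-step template of Propositions~\ref{Prop:pre} and~\ref{Prop:post1}: first compute the mean of the correlator $\Upsilon_k$, which will already be deterministic and equal to the claimed value, and then lift the convergence of the block average from the mean to an almost-sure statement using the independence of the secure phase shifts across blocks.

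First I would insert the post-FFT model (with $n_0 = 0$) $R_k(i) = \frac{N_c'}{N_c} e^{j\phi_k} e^{-j\frac{2\pi t_0'}{T_s} i} \tilde{S}'_{k,i} + Z'_k(i)$ into $\Upsilon_k(i_{p_1}, i_{p_2})$, using that the pilots sit on the two sub-carriers so that $\tilde{S}'_{k,i_{p_1}} = \boldsymbol{p}_1 e^{-j\Theta_{k,i_{p_1}}}$ and $\tilde{S}'_{k,i_{p_2}} = \boldsymbol{p}_2 e^{-j\Theta_{k,i_{p_2}}}$. Expanding $R_k(i_{p_1}) R_k^*(i_{p_2})$ produces four terms: a signal--signal term, two signal--$Z'$ cross terms, and a $Z'$--$Z'$ term. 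The decisive observation is that the compensating factor $e^{j(\Theta_{k,i_{p_1}} - \Theta_{k,i_{p_2}})}$ built into $\Upsilon_k$ exactly cancels the phase factor carried by the signal--signal term (and the $\phi_k$ factors cancel on their own), leaving the deterministic quantity $(N_c'/N_c)^2 e^{-j\frac{2\pi t_0'}{T_s}(i_{p_1} - i_{p_2})} |\boldsymbol{p}_1|^2 |\boldsymbol{p}_2|^2$, which is exactly the asserted limit.

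Next I would check that the three remaining terms are zero-mean. After the cancellation each retains a single uncompensated phase factor---$e^{-j\Theta_{k,i_{p_2}}}$ and $e^{j\Theta_{k,i_{p_1}}}$ for the two cross terms, and $e^{j(\Theta_{k,i_{p_1}} - \Theta_{k,i_{p_2}})}$ for the $Z'$--$Z'$ term. Since each $\Theta_{k,i}$ is uniform over the $M$-PSK phases (so $\mathbb{E}\{e^{\pm j\Theta_{k,i}}\} = 0$ for $M \geq 2$), since the phases on distinct sub-carriers $i_{p_1} \neq i_{p_2}$ are independent, and since the secure phases are independent of the jamming-plus-noise $Z'_k$, each of these expectations factors and vanishes. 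Hence $\mathbb{E}\{\Upsilon_k\}$ equals the signal term alone, matching the target.

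Finally I would promote this to almost-sure convergence of $\frac{1}{K} \sum_{k=0}^{K-1} \Upsilon_k$, following the variance-bounding and Chebyshev/Borel--Cantelli route of the pre-FFT proof (equivalently Kolmogorov's strong law). The hypothesis $\mathbb{E}\{|z(t)|^4\} < \infty$ is used precisely here: it forces uniformly bounded second moments of the cross terms and of the product $Z'_k(i_{p_1}) Z'^*_k(i_{p_2})$, so the per-block variances are uniformly bounded and $\sum_k \mathrm{Var}(\cdot)/k^2 < \infty$. The main obstacle is that the jamming $z(t)$ is arbitrary and may be correlated across OFDM blocks, so $\{\Upsilon_k\}$ is \emph{not} i.i.d.\ and a naive strong law fails; the way around this is to condition on the jamming realization and exploit that the secure phases---and the AWGN---are independent across blocks, which renders the three zero-mean contributions conditionally independent across $k$ and hence amenable to the strong law, after which averaging over the jamming yields the unconditional a.s.\ statement.
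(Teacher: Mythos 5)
Your proposal is correct and follows exactly the route the paper intends: the paper omits the proof of Proposition~\ref{Prop:post2}, stating only that it ``follows a similar approach as Proposition~\ref{Prop:pre},'' and your expansion of $\Upsilon_k$ into the deterministic signal--signal term plus three zero-mean terms (each killed in expectation by a residual uncompensated secure phase factor), followed by the variance-bounding/Chebyshev argument enabled by $\mathbb{E}\{|z(t)|^4\}<\infty$ and the independence of the phases across blocks, is precisely that template. Your observation that $\{\Upsilon_k\}$ need not be i.i.d.\ because the jamming may be correlated across blocks, handled by conditioning on the jamming and exploiting the block-wise independence of the secure phases, matches the paper's treatment of the analogous terms $Y_{k,2}$, $Y_{k,3}$, $Y_{k,4}$ in the appendix.
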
 
	The proof of Proposition \ref{Prop:post2} follows a similar approach as Proposition \ref{Prop:pre}, and we skip it for brevity. Note that $t_0' \in [0, T_{CP, 2})$, so $t_0'$ can be estimated from the phase of $\frac{1}{K}\sum_{k = 0}^{K-1} \Upsilon_k(i_{p_1}, i_{p_2})$.	Likewise, the phase offset $\phi_0$ can be estimated by averaging $R_k(i_p) e^{j\Theta_{k, i_p}}$ after compensating for the frequency offset.
	
	\paragraph*{Discussions} 
	Note that under disguised jamming, the estimator averages multiple OFDM blocks to make use of the encrypted signal for an accurate synchronization. In practice, estimation errors always exist in synchronization, so the receiver has to keep track of all the offsets, which can be implemented by the moving average approach. 
	
	The pre-FFT synchronization exploits the correlation between secure CP and the OFDM body tail. The data-aided synchronization approach, i.e., inserting independent training sequence in each OFDM frame, is still an option under disguised jamming if encryption is applied to the training sequence. However, the CP based approach experiences less delay in synchronization. By inserting secure CP for each OFDM block, it is easier to keep track of the time offset continuously.
	
	In the post-FFT stage, inserting more pilots can accelerate the synchronization process; meanwhile, under fading channels, the channel estimation process necessitates pilot symbols over different sub-carrier locations. Channel estimation can be implemented by averaging the received pilot symbols at each sub-carrier location following the approach in synchronization. However, an important point here is that for time varying channels, the duration of the OFDM blocks used for averaging should be smaller than the coherence time so that the channel does not change significantly during each estimation. This is guaranteed in practical systems where the whole OFDM frame duration is shorter than the channel coherence time \cite{IEEE80211}.

	\section{Symmetricity and Capacity Analysis using the AVC Model} \label{Sec:Analysis}
	
	In this section, we analyze the symmetricity and capacity of the proposed SP-OFDM system using the arbitrarily varying channel (AVC) model. Recall that from Section \ref{Sec:Design}, under perfect synchronization, the equivalent channel model of SP-OFDM can be expressed as
	\begin{equation}
	R = S + e^{j \Theta} J + N, \label{Eq:AVC}
	\end{equation}
	where $S \in \Phi, J \in \mathbb{C}$, $N \sim \mathcal{CN}(0, \sigma^2 I)$, $\Theta$ is uniformly distributed over $\{ \frac{2\pi i}{M} \mid i =0, 1, ..., M-1  \}$, and $\mathcal{CN}(\boldsymbol{\mu}, \boldsymbol{\Sigma})$ denotes a circularly symmetric complex Gaussian distribution with mean $\boldsymbol{\mu}$ and variance $\boldsymbol{\Sigma}$.  For generality, in this section, we do not assume any \emph{a priori} information on the jamming $J$, except a finite average power constraint of $P_J$, i.e., $\mathbb{E}\{ |J|^2 \} \leq P_J$. We will show that the AVC corresponding to SP-OFDM is nonsymmetrizable, and hence the AVC capacity of SP-OFDM is positive under disguised jamming.
	


	\subsection{AVC Symmetricity Analysis}
	The arbitrarily varying channel (AVC) model, first introduced in \cite{Blackwell1960AMS}, characterizes the communication channels with unknown states which may vary in arbitrary manners across time. For the jamming channel (\ref{Eq:AVC}) of interest, the jamming symbol $J$ can be viewed as the state of the channel under consideration. The channel capacity of AVC evaluates the data rate of the channel under the most adverse jamming interference among all the possibilities \cite{Csiszar1988TIT}.  Note that unlike the jamming free model where the channel noise sequence is independent of the authorized signal and is independent and identically distributed (i.i.d.), the AVC model considers the possible correlation between the authorized signal and the jamming, as well as the possible temporal correlation among the jamming symbols, which may cause much worse damages to the communication. 
	
	To prove the effectiveness of the proposed SP-OFDM under disguised jamming, we need to introduce some basic concepts and properties of the AVC model. First we revisit the definition of symmetrizable AVC channel.       
	
	\begin{definition}\cite{Csiszar1988TIT}\cite{Csiszar1992TIT} 
		Let $W(\boldsymbol{r} \mid \boldsymbol{s}, \boldsymbol{x})$ denote the conditional PDF of the received signal $R$ given the transmitted symbol $\boldsymbol{s} \in \Phi $ and the jamming symbol $\boldsymbol{x} \in \mathbb{C}$. The AVC channel (\ref{Eq:AVC}) is symmetrizable iff for some auxiliary channel $\pi: \Phi \rightarrow \mathbb{C}$, $\forall \boldsymbol{s}, \boldsymbol{s}' \in \Phi, \boldsymbol{r} \in \mathbb{C}$, we have
		\begin{equation}
			\int_{\mathbb{C}} W(\boldsymbol{r} \mid \boldsymbol{s}, \boldsymbol{x}) \text{d} F_\pi (\boldsymbol{x} | \boldsymbol{s}') = \int_{\mathbb{C}} W(\boldsymbol{r} \mid \boldsymbol{s}', \boldsymbol{x}) \text{d} F_\pi (\boldsymbol{x} | \boldsymbol{s}),  \label{Eq:Symmetric}
		\end{equation}
		 where $F_\pi (\cdot | \cdot)$ is the probability measure of the output of channel $\pi$ given the input, i.e., the conditional CDF 
		\begin{equation}
			F_\pi (\boldsymbol{x} | \boldsymbol{s})  = \Pr\{ Re(\pi(\boldsymbol{s})) \leq Re(\boldsymbol{x}), Im(\pi(\boldsymbol{s})) \leq Im(\boldsymbol{x})  \}, 
		\end{equation}  
		for $\boldsymbol{x} \in \mathbb{C}, \boldsymbol{s} \in \Phi$, where $\pi(\boldsymbol{s})$ denotes the output of channel $\pi$ given input symbol $\boldsymbol{s}$.
	\end{definition}
	We denote the set of all the auxiliary channels, $\pi$'s, that can symmetrize channel (\ref{Eq:AVC}) by $\Pi$, that is,
	\begin{equation}
		\Pi = \left\lbrace \pi \mid \text{ Eq. (\ref{Eq:Symmetric}) is satisfied w.r.t. $\pi$ $ \forall \boldsymbol{s}, \boldsymbol{s}' \in \Phi ,\boldsymbol{r} \in \mathbb{C}$} \right\rbrace .\small
	\end{equation}
	
	 With the average jamming power constraint considered in this paper, we further introduce the definition of $l$-symmetrizable channel.
	
	\begin{definition}\cite{Csiszar1992TIT} \label{Def:l-AVC}
		The AVC channel (\ref{Eq:AVC}) is called $l$-symmetrizable under average jamming power constraint iff there exists a $\pi \in \Pi$ such that
		\begin{equation}
			\int_{\mathbb{C}} |\boldsymbol{x}|^2 \text{d} F_\pi (\boldsymbol{x} | \boldsymbol{s}) < \infty, ~~\forall \boldsymbol{s} \in \Phi. \label{Eq:l-AVC}
		\end{equation}
	\end{definition}
		
	In \cite{Csiszar1992TIT}, it was shown that reliable communication can be achieved as long as the AVC channel is not $l$-symmetrizable.
	
	 \begin{lemma}\cite[Corollary 2]{Csiszar1992TIT} \label{Theorem:AVC}
		 The deterministic coding capacity\footnote{The deterministic coding capacity is defined by the capacity that can be achieved by a communication system, when it applies only one code pattern during the information transmission. In other words, the coding scheme is deterministic and can be readily repeated by other users \cite{Ericson1985TIT}.} of AVC channel (\ref{Eq:AVC}) is positive under any hostile jamming with finite average power constraint iff the AVC is not $l$-symmetrizable. Furthermore, given a specific average jamming power constraint $P_J$, the channel capacity $C$ in this case equals
		 \begin{equation}
		 \begin{array}{c}
		 C = \underset{\mathcal{P}_{S}}{\max} ~\underset{F_J}{\min} ~ I(S, R), \vspace{0.1in}\\
		 s.t.~~ \int_{\mathbb{C}} |\boldsymbol{x}|^2 \text{d} F_J(\boldsymbol{x}) \leq P_J,
		 \end{array}\label{Eq:capacity0}		 	
		 \end{equation}
		 where $I(S, R)$ denotes the mutual information (MI) between the $R$ and $S$ in (\ref{Eq:AVC}), $\mathcal{P}_{S}$ denotes the probability distribution of $S$ over $\Phi$ and $F_J(\cdot)$ the CDF of $J$. 
	 \end{lemma}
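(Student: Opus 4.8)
The statement bundles two claims --- a positivity dichotomy and an exact capacity formula --- so the plan is to treat them in sequence, following the arbitrarily varying channel machinery of Ahlswede and of Csisz\'ar--Narayan. I would first fix notation, writing $W(\boldsymbol{r} \mid \boldsymbol{s}, \boldsymbol{x})$ for the transition law of (\ref{Eq:AVC}) already averaged over the phase $\Theta$ and the Gaussian $N$, and letting $C_r$ denote the random-coding capacity of this AVC under the average state constraint $\mathbb{E}\{|J|^2\} \leq P_J$. The backbone is Ahlswede's dichotomy: the deterministic-coding capacity of an AVC is either $0$ or equal to $C_r$. Consequently the whole proof reduces to two tasks, namely (i) evaluating $C_r$ and (ii) deciding which side of the dichotomy we are on, where the second task is governed precisely by $l$-symmetrizability as defined in Definition \ref{Def:l-AVC}.

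For the random-coding capacity I would invoke the constrained version of the Blackwell--Breiman--Thomasian theorem: averaging the channel over any admissible jamming distribution $F_J$ obeying the power budget yields an ordinary (continuous-output) memoryless channel, and a standard random-coding argument together with a Fano converse gives $C_r = \max_{\mathcal{P}_S} \min_{F_J} I(S, R)$ under the constraint $\int_{\mathbb{C}} |\boldsymbol{x}|^2 \text{d}F_J(\boldsymbol{x}) \leq P_J$, which is exactly the max-min value in (\ref{Eq:capacity0}). Since the mutual information $I(S,R)$ is concave in $\mathcal{P}_S$ and convex in $F_J$, the inner minimization over admissible jamming and the outer maximization over inputs on $\Phi$ form a well-posed saddle point, which justifies treating $C$ as a single number.

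The positivity dichotomy is the crux. For necessity (capacity $=0$ when the channel is $l$-symmetrizable), I would show directly that a symmetrizing auxiliary channel $\pi \in \Pi$ with finite second moment (\ref{Eq:l-AVC}) enables a confusion attack: given the true codeword, the jammer draws a fake state sequence through $\pi$, and identity (\ref{Eq:Symmetric}) forces any two candidate codewords to induce the same output law, so the average error probability of every deterministic code stays bounded away from $0$; the finite-second-moment clause is exactly what guarantees this attack respects the power constraint $P_J$. For sufficiency (positive capacity when the channel is not $l$-symmetrizable), I would construct a deterministic code of positive rate and then lift it up to $C_r$ via Ahlswede's \emph{elimination technique}: a random code of rate $C_r$ is derandomized using only a sublinear amount of shared key, and that key is conveyed by a short positive-rate deterministic preamble whose very existence is what non-symmetrizability buys. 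The preamble decoder is a joint-typicality rule that declares message $i$ only when $(\boldsymbol{s}_i, \boldsymbol{r})$ are typical for some admissible state and no competitor $\boldsymbol{s}_j$ forms a symmetrizing configuration with $\boldsymbol{s}_i$ and $\boldsymbol{r}$.

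I expect the sufficiency direction to be the main obstacle. The delicate point is the large-deviations estimate showing that, for a code drawn at random, the probability --- uniformly over all state sequences meeting the power constraint --- of the bad event ``some wrong $\boldsymbol{s}_j$ is jointly typical with $(\boldsymbol{s}_i, \boldsymbol{r})$ in a symmetrizing way'' decays to zero. Quantifying non-$l$-symmetrizability as a strictly positive gap, so that the identity (\ref{Eq:Symmetric}) fails by a fixed margin for inputs of bounded cost, and then propagating that gap through the typicality counting is the technical heart of the argument. The continuous output alphabet here forces the finite-alphabet type-counting of Csisz\'ar--Narayan to be replaced by second-moment and Chebyshev control of the relevant empirical integrals, in the same spirit as the synchronization propositions established earlier in this paper.
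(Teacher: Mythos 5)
First, a point of calibration: the paper does not prove this statement at all --- it is imported verbatim as Corollary~2 of \cite{Csiszar1992TIT}, so there is no ``paper's proof'' for your argument to match. Judged on its own, your outline assembles the right ingredients from the AVC literature (symmetrizability forcing a confusion attack for the converse, a decoding-rule/derandomization argument for achievability, and a minimax evaluation of the mutual information), and the necessity direction is essentially sound: a symmetrizing $\pi$ with finite second moment yields, via (\ref{Eq:Symmetric}), two codewords whose output laws the decoder cannot separate, and the finiteness clause in Definition~\ref{Def:l-AVC} is indeed what keeps the attack inside the average-power budget.

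The genuine gap is in the step ``the deterministic-coding capacity of an AVC is either $0$ or equal to $C_r$,'' which you use to reduce everything to evaluating the random-coding capacity. Ahlswede's dichotomy and the elimination technique hold for AVCs \emph{without} a state constraint; under an average state (jamming-power) constraint the dichotomy is known to fail --- the deterministic capacity can be positive yet strictly smaller than $C_r(P_J)$, which is precisely why Csisz\'ar and Narayan had to replace elimination by a direct typicality/minimum-distance decoder and a type-counting (here, moment-based) analysis. Their constrained capacity formula restricts the outer maximization to input distributions whose ``symmetrizability cost'' exceeds $P_J$, and the equality with the unrestricted max-min in (\ref{Eq:capacity0}) only follows because non-$l$-symmetrizability in the sense of Definition~\ref{Def:l-AVC} means \emph{no} finite-power symmetrizer exists at all, so that cost is $+\infty$ for every admissible input and the restriction is vacuous. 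Your sketch never makes this observation, so the ``furthermore'' clause --- that the positive capacity actually equals (\ref{Eq:capacity0}) rather than something smaller --- is not justified as written. Repairing it requires either quoting the constrained Csisz\'ar--Narayan theorem in its full form and checking the cost condition, or carrying out the large-deviations estimate you defer to the end, uniformly over power-constrained state sequences; as it stands that step is asserted rather than proved.
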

	 
	 First, we show that the traditional OFDM system is $l$-symmetrizable under disguised jamming.
	 \begin{theorem}
	 	The traditional OFDM system is $l$-symmetrizable. Therefore, the deterministic coding capacity is zero under the worst disguised jamming with finite average jamming power.
	 \end{theorem}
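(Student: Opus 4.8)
The plan is to exhibit a single explicit symmetrizing auxiliary channel for the traditional OFDM system and then invoke Lemma \ref{Theorem:AVC}. The traditional OFDM system is recovered from the SP-OFDM model (\ref{Eq:AVC}) by removing the secure phase randomization, i.e.\ by setting $\Theta \equiv 0$, so that its equivalent channel is $R = S + J + N$ with $N \sim \mathcal{CN}(0, \sigma^2)$. The corresponding transition density is the Gaussian $W(r \mid s, x) = \frac{1}{\pi \sigma^2} \exp\left(-|r - s - x|^2 / \sigma^2\right)$, centered at $s + x$. The key observation driving the whole argument is that \emph{disguised jamming}---the jammer replicating the legitimate constellation---corresponds precisely to the deterministic identity channel, and this channel is exactly a symmetrizer.

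First I would take the auxiliary channel $\pi$ to be deterministic with $\pi(s) = s$ for every $s \in \Phi$, so that $F_\pi(\cdot \mid s')$ places a unit point mass at $x = s'$. Substituting into the two sides of the symmetrizability condition (\ref{Eq:Symmetric}), the left-hand integral collapses to $W(r \mid s, s')$ and the right-hand integral collapses to $W(r \mid s', s)$. Both equal $\frac{1}{\pi \sigma^2} \exp\left(-|r - s - s'|^2 / \sigma^2\right)$, since $|r - s - s'|^2 = |r - s' - s|^2$; hence (\ref{Eq:Symmetric}) holds for all $s, s' \in \Phi$ and all $r \in \mathbb{C}$, so $\pi \in \Pi$.

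Next I would verify the finite-power requirement of Definition \ref{Def:l-AVC}. Since $\pi$ is the identity and $\Phi$ is a finite constellation, $\int_{\mathbb{C}} |x|^2 \, dF_\pi(x \mid s) = |s|^2 \leq \max_{s \in \Phi} |s|^2 < \infty$ for every $s$, and the associated average jamming power equals the signal power $P_S$. Thus the traditional OFDM channel is $l$-symmetrizable, and this identity symmetrizer is realizable by any jammer whose power budget satisfies $P_J \geq P_S$. Applying Lemma \ref{Theorem:AVC} then immediately yields that the deterministic coding capacity is zero under this worst-case disguised jamming.

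The argument is a verification rather than a deep derivation, so the main obstacle is not analytical difficulty but rather correctly matching the abstract AVC formalism to the concrete OFDM model: namely, identifying that the traditional OFDM system is the $\Theta \equiv 0$ specialization of (\ref{Eq:AVC}), and handling the degenerate (point-mass) measure $F_\pi$ carefully so that the Stieltjes integrals in (\ref{Eq:Symmetric}) and (\ref{Eq:l-AVC}) are well defined. The conceptual crux---that the symmetry $|r - s - s'|^2 = |r - s' - s|^2$ makes disguised jamming an exact symmetrizer---is what renders the traditional system fundamentally defenseless, and is precisely the symmetry that the phase randomization $e^{j\Theta}$ in SP-OFDM is designed to destroy.
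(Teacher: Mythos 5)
Your proposal is correct and follows essentially the same route as the paper: the paper's proof also takes the symmetrizer to be the identity map on $\Phi$ (writing the condition directly as $W(\boldsymbol{r}\mid\boldsymbol{s},\boldsymbol{s}')=W(\boldsymbol{r}\mid\boldsymbol{s}',\boldsymbol{s})$, which is exactly your point-mass substitution), checks the finite-power condition via the finite average power of $\Phi$, and concludes by Lemma~\ref{Theorem:AVC}. Your write-up merely makes the choice of auxiliary channel and the degenerate measure $F_\pi$ explicit, which the paper leaves implicit.
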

 	\begin{proof}
 		The AVC model of the traditional OFDM system is
 		\begin{equation}
 			R = S + J + N. \label{Eq:OFDM}
 		\end{equation}
 		We will show that when $S$ and $J$ have the same constellation $\Phi$, hence the same finite average power, the AVC channel is $l$-symmetrizable. It follows from (\ref{Eq:OFDM}) that
 		\begin{equation}
 			W(\boldsymbol{r} \mid \boldsymbol{s}, \boldsymbol{s}') = W(\boldsymbol{r} \mid \boldsymbol{s}', \boldsymbol{s}),~\forall \boldsymbol{s}, \boldsymbol{s}' \in \Phi, \boldsymbol{r} \in \mathbb{C}.
 		\end{equation}
 		Since $\Phi$ has finite average power, the average power constraint (\ref{Eq:l-AVC}) is satisfied by disguised jamming. Hence, channel (\ref{Eq:OFDM}) is $l$-symmetrizable. From Lemma $\ref{Theorem:AVC}$, a necessary condition for a positive  AVC deterministic coding capacity is that the channel is not $l$-symmetrizable. So the traditional OFDM system has zero deterministic coding capacity under disguised jamming with finite average jamming power. 		
 	\end{proof}

	 Next, we show that with the proposed secure precoding, it is impossible to $l$-symmetrize the AVC channel (\ref{Eq:AVC}) corresponding to the SP-OFDM system.
	
	\begin{theorem} \label{Theorem:symm}
		The AVC channel corresponding to the proposed SP-OFDM is not $l$-symmetrizable.
	\end{theorem}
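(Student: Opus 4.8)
The plan is to prove the stronger statement that channel (\ref{Eq:AVC}) fails to be symmetrizable in the sense of (\ref{Eq:Symmetric}) \emph{at all}; by Definition \ref{Def:l-AVC} this immediately precludes $l$-symmetrizability, regardless of the power constraint (\ref{Eq:l-AVC}). The engine of the argument is that the secret phase $\Theta$ is drawn uniformly over $\{2\pi i/M\}$ with $M$ a power of two (hence even), so its support is invariant under adding $\pi$; equivalently, $e^{j\Theta}J$ and $-e^{j\Theta}J$ are identically distributed. I would first write the transition density of (\ref{Eq:AVC}) as the Gaussian mixture
\begin{equation}
W(\boldsymbol{r}\mid\boldsymbol{s},\boldsymbol{x})=\frac{1}{M}\sum_{m=0}^{M-1}\frac{1}{\pi\sigma^2}\exp\!\left(-\frac{\left|\boldsymbol{r}-\boldsymbol{s}-e^{j2\pi m/M}\boldsymbol{x}\right|^2}{\sigma^2}\right),
\end{equation}
and then move to the two-dimensional Fourier transform in $\boldsymbol{r}$, which converts the convolutional noise structure into a clean product.

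A direct computation of this transform, $\hat W(\boldsymbol{\omega}\mid\boldsymbol{s},\boldsymbol{x})=\int_{\mathbb{C}}W(\boldsymbol{r}\mid\boldsymbol{s},\boldsymbol{x})e^{j\operatorname{Re}(\bar{\boldsymbol{\omega}}\boldsymbol{r})}\,\text{d}\boldsymbol{r}$, gives
\begin{equation}
\hat W(\boldsymbol{\omega}\mid\boldsymbol{s},\boldsymbol{x})=e^{-\frac{\sigma^2}{4}|\boldsymbol{\omega}|^2}\,e^{j\operatorname{Re}(\bar{\boldsymbol{\omega}}\boldsymbol{s})}\,g(\boldsymbol{\omega},\boldsymbol{x}),
\end{equation}
where $g(\boldsymbol{\omega},\boldsymbol{x})=\frac{1}{M}\sum_{m=0}^{M-1}e^{j\operatorname{Re}(\bar{\boldsymbol{\omega}}\,e^{j2\pi m/M}\boldsymbol{x})}$. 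The step I expect to be the crux is the observation that $g(\boldsymbol{\omega},\boldsymbol{x})$ is \emph{real}-valued: since $M$ is even the rotations pair off as $e^{j2\pi m/M}$ and $e^{j2\pi(m+M/2)/M}=-e^{j2\pi m/M}$, so each conjugate pair of exponentials collapses to a cosine; equivalently $g(-\boldsymbol{\omega},\boldsymbol{x})=\overline{g(\boldsymbol{\omega},\boldsymbol{x})}=g(\boldsymbol{\omega},\boldsymbol{x})$. This is precisely the feature the phase randomization creates and that the traditional OFDM channel lacks. Hence, for any auxiliary channel $\pi$, the integral $G_{\boldsymbol{s}}(\boldsymbol{\omega})\overset{\triangle}{=}\int_{\mathbb{C}}g(\boldsymbol{\omega},\boldsymbol{x})\,\text{d}F_\pi(\boldsymbol{x}\mid\boldsymbol{s})$ is a real-valued, continuous function with $G_{\boldsymbol{s}}(\boldsymbol{0})=1$.

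With this in hand I would argue by contradiction. Suppose some $\pi\in\Pi$ symmetrizes (\ref{Eq:AVC}). Applying the Fourier transform in $\boldsymbol{r}$ to both sides of (\ref{Eq:Symmetric}) (legitimate by Fubini, as the integrand is dominated by the Gaussian envelope) and cancelling the common nonvanishing factor $e^{-\frac{\sigma^2}{4}|\boldsymbol{\omega}|^2}$ yields
\begin{equation}
e^{j\operatorname{Re}(\bar{\boldsymbol{\omega}}\boldsymbol{s})}\,G_{\boldsymbol{s}'}(\boldsymbol{\omega})=e^{j\operatorname{Re}(\bar{\boldsymbol{\omega}}\boldsymbol{s}')}\,G_{\boldsymbol{s}}(\boldsymbol{\omega}),\qquad\forall\,\boldsymbol{s},\boldsymbol{s}'\in\Phi,\ \boldsymbol{\omega}\in\mathbb{C}.
\end{equation}
Because $G_{\boldsymbol{s}}(\boldsymbol{0})=1$, continuity keeps $G_{\boldsymbol{s}}$ nonzero on an open neighborhood of the origin, where division gives $G_{\boldsymbol{s}'}(\boldsymbol{\omega})/G_{\boldsymbol{s}}(\boldsymbol{\omega})=e^{j\operatorname{Re}(\bar{\boldsymbol{\omega}}(\boldsymbol{s}'-\boldsymbol{s}))}$. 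The left-hand side is a ratio of reals, so the right-hand side must be real on that neighborhood, forcing the continuous map $\boldsymbol{\omega}\mapsto\operatorname{Re}(\bar{\boldsymbol{\omega}}(\boldsymbol{s}'-\boldsymbol{s}))$ — which vanishes at the origin — to take values in $\pi\mathbb{Z}$ throughout, hence to be identically zero there. A nonzero real-linear functional cannot vanish on an open set, so $\boldsymbol{s}'=\boldsymbol{s}$.

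Since $\Phi$ contains at least two distinct symbols, this is the desired contradiction, so no $\pi$ satisfies (\ref{Eq:Symmetric}); in particular none can satisfy the finite-power requirement (\ref{Eq:l-AVC}), and the channel is not $l$-symmetrizable. The only genuinely delicate points I anticipate are bookkeeping ones — justifying the transform/integral interchange and the local nonvanishing of $G_{\boldsymbol{s}}$ near the origin — while the conceptual weight rests entirely on the real-valuedness of $g$, which is exactly what even-order phase randomization buys. Together with Lemma \ref{Theorem:AVC}, this establishes that SP-OFDM attains a positive deterministic coding capacity under any finite-power jamming.
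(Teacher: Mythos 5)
Your proposal is correct and follows essentially the same route as the paper's proof: pass to characteristic functions, cancel the everywhere-nonzero Gaussian factor, and exploit that the CF of $\boldsymbol{x}e^{j\Theta}$ is real-valued because $M$ is even, so the residual factor $e^{j\operatorname{Re}(\bar{\boldsymbol{\omega}}(\boldsymbol{s}-\boldsymbol{s}'))}$ cannot be matched for $\boldsymbol{s}\neq\boldsymbol{s}'$. The only (cosmetic) difference is the endgame: the paper concludes the CF must vanish off a discrete family of lines and uses uniform continuity to extend this to all of $\mathbb{R}^2$, contradicting that a CF equals $1$ at the origin, whereas you localize near the origin where $G_{\boldsymbol{s}}$ is nonzero and derive the contradiction there --- both are valid and rest on the same key observation.
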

	\begin{proof}
		We prove this result by contradiction. Suppose that there exists a channel $\pi \in \Pi$ such that the AVC channel is $l$-symmetrizable. Denote the output of channel $\pi$ given input $\boldsymbol{x}$ by $\pi(\boldsymbol{x})$, and define the corresponding AVC channel output for inputs $\boldsymbol{s}$ and $\boldsymbol{s}'$ as
		\begin{equation}
			\hat{R}(\boldsymbol{s}, \boldsymbol{s}') = \boldsymbol{s} + \pi(\boldsymbol{s}') e^{j\Theta} + N, 
		\end{equation}
		where $\hat{R}(\boldsymbol{s}, \boldsymbol{s}')$ denotes the channel output. Following (\ref{Eq:Symmetric}),  $\hat{R}(\boldsymbol{s}, \boldsymbol{s}')$ and $\hat{R}(\boldsymbol{s}', \boldsymbol{s})$ have the same distribution. Let ${\text{\large$\varphi$}}_{X}(\omega_1, \omega_2)$ denote the characteristic function (CF) of a complex random variable $X$. So we have
		\begin{equation}
			{\text{\large$\varphi$}}_{\hat{R}(\boldsymbol{s}, \boldsymbol{s}')}(\omega_1, \omega_2) \equiv {\text{\large$\varphi$}}_{\hat{R}(\boldsymbol{s}', \boldsymbol{s})}(\omega_1, \omega_2), \label{Eq: tmp1}
		\end{equation}
		and
		\begin{equation}
			{\text{\large$\varphi$}}_{\hat{R}(\boldsymbol{s}, \boldsymbol{s}')}(\omega_1, \omega_2) = {\text{\large$\varphi$}}_{[\boldsymbol{s} + \pi(\boldsymbol{s}') e^{j\Theta}]}(\omega_1, \omega_2) ~{\text{\large$\varphi$}}_{N}(\omega_1, \omega_2),
		\end{equation}
		where, for the complex Gaussian noise $N$, we have
		\begin{equation}
			{\text{\large$\varphi$}}_{N}(\omega_1, \omega_2) = e^{-\frac{\sigma^2}{4}(w_1^2 + w_2^2)},~~\omega_1, \omega_2 \in (-\infty, +\infty),
		\end{equation}
		which is non-zero over $\mathbb{R}^2$. Thus by eliminating the characteristic functions of the Gaussian noises on both sides of equation (\ref{Eq: tmp1}), we have
		\begin{equation}
			{\text{\large$\varphi$}}_{[\boldsymbol{s} + \pi(\boldsymbol{s}') e^{j\Theta}]}(\omega_1, \omega_2)\! =\! {\text{\large$\varphi$}}_{[\boldsymbol{s}' + \pi(\boldsymbol{s}) e^{j\Theta}]}(\omega_1, \omega_2). \label{Eq:tmp2}
		\end{equation}  
		 for $\omega_1, \omega_2 \in (-\infty, +\infty)$. Let $\boldsymbol{s} = s_1 + js_2$, we can then express ${\text{\large$\varphi$}}_{[\boldsymbol{s} + \pi(\boldsymbol{s}') e^{j\Theta}]}(\omega_1, \omega_2)$ as
		 \begin{equation}
		 	{\text{\large$\varphi$}}_{[\boldsymbol{s} + \pi(\boldsymbol{s}') e^{j\Theta}]}(\omega_1, \omega_2) = e^{js_1\omega_1 + js_2\omega_2} {\text{\large$\varphi$}}_{[\pi(\boldsymbol{s}') e^{j\Theta}]}(\omega_1, \omega_2), \label{Eq:tmp3}
		 \end{equation}
		and
		\begin{eqnarray}
		 	&&{\text{\large$\varphi$}}_{[\pi(\boldsymbol{s}') e^{j\Theta}]}(\omega_1, \omega_2) \nonumber \\
		 	&=&\mathbb{E} \{ e^{j\omega_1 Re(\pi(\boldsymbol{s}') e^{j\Theta}) + j\omega_2 Im(\pi(\boldsymbol{s}') e^{j\Theta})} \} \nonumber \\
		 	&=& \int_{\mathbb{C}} \mathbb{E} \{ e^{j\omega_1 Re(\boldsymbol{x} e^{j\Theta}) + j\omega_2 Im(\boldsymbol{x} e^{j\Theta})} \} \text{d} F_\pi (\boldsymbol{x} | \boldsymbol{s}'). 
		 \end{eqnarray}
		 Recall that under the proposed secure precoding scheme, $\Theta$ is uniformly distributed over $\{ \frac{2\pi i}{M} \mid i =0, 1, ..., M-1  \}$, where $M$ is a power of $2$. We have
		 \begin{align}
		 	&\mathbb{E} \{ e^{j\omega_1 Re(\boldsymbol{x} e^{j\Theta}) + j\omega_2 Im(\boldsymbol{x} e^{j\Theta})} \}  \nonumber \\
		 	=& \frac{1}{M} \sum_{i = 0}^{M-1} e^{j \omega_1 |\boldsymbol{x}| \cos(\frac{2\pi i}{M} +\arg(\boldsymbol{x})) + j \omega_2 |\boldsymbol{x}| \sin (\frac{2\pi i}{M} +\arg(\boldsymbol{x}))} \nonumber \\
		 	=&\frac{2}{M} \sum_{i = 0}^{M/2-1} \cos \left\lbrace   \omega_1 |\boldsymbol{x}|\cos[{2\pi i}/{M} +\arg(\boldsymbol{x})] \right. \nonumber \\
		 	&~~~~~~~~~~~~~\left. +   \omega_2 |\boldsymbol{x}| \sin [{2\pi i}/{M} +\arg(\boldsymbol{x})]  \right\rbrace , \label{Eq:tmp4}
		 \end{align}
		 which is of real value for $\omega_1, \omega_2 \in (-\infty, +\infty)$. So ${\text{\large$\varphi$}}_{[\pi(\boldsymbol{s}') e^{j\Theta}]}(\omega_1, \omega_2)$ and ${\text{\large$\varphi$}}_{[\pi(\boldsymbol{s}) e^{j\Theta}]}(\omega_1, \omega_2)$ are also real-valued over $\mathbb{R}^2$. For $\boldsymbol{s} \neq \boldsymbol{s}'$ and $\boldsymbol{s}' = s_1' + js_2'$,  $e^{j[(s_1 - s_1')\omega_1 + (s_2 - s_2')\omega_2]}$ has non-zero imaginary part for $(s_1 - s_1')\omega_1 + (s_2 - s_2')\omega_2 \neq n\pi$, $n\in \mathbb{Z}$. Without loss of generality, we assume $s_1 \neq s_1'$. From (\ref{Eq:tmp2}), (\ref{Eq:tmp3}) and (\ref{Eq:tmp4}), for $\omega_1 + \frac{s_2 - s_2'}{s_1 - s_1'} \omega_2 \neq \frac{n\pi}{s_1 - s_1'} , \forall n\in \mathbb{Z}$, we have
		 \begin{equation}
		 	{\text{\large$\varphi$}}_{[\pi(\boldsymbol{s}) e^{j\Theta}]}(\omega_1, \omega_2) = 0. \label{Eq:CF_zero1}
		 \end{equation}
			 
		On the other hand, the characteristic function of an RV should be uniformly continuous in the real domain \cite[Theorem 15.21]{Klenke2008}. So for any  fixed $\omega_2 \in (-\infty, \infty)$, we should have
		\begin{align}
			&{\text{\large$\varphi$}}_{[\pi(\boldsymbol{s}) e^{j\Theta}]}( \frac{n\pi - (s_2 - s_2')\omega_2}{s_1 - s_1'} , \omega_2) \nonumber \\ 
			=& \lim\limits_{ {\omega_1 \rightarrow \frac{n\pi - (s_2 - s_2')\omega_2}{s_1 - s_1'}}} {\text{\large$\varphi$}}_{[\pi(\boldsymbol{s}) e^{j\Theta}]}(\omega_1, \omega_2),~\forall n\in \mathbb{Z}.
		\end{align} 
		   For $\omega_1 \in \left(\frac{(n-1)\pi - (s_2 - s_2')\omega_2}{s_1 - s_1'},\frac{n\pi - (s_2 - s_2')\omega_2}{s_1 - s_1'}\right)\cup \left(\frac{n\pi - (s_2 - s_2')\omega_2}{s_1 - s_1'}, \frac{(n+1)\pi - (s_2 - s_2')\omega_2}{s_1 - s_1'}\right)$, ${\text{\large$\varphi$}}_{[\pi(\boldsymbol{s}) e^{j\Theta}]}(\omega_1, \omega_2) \equiv 0$, so
		   \begin{equation}
		   	{\text{\large$\varphi$}}_{[\pi(\boldsymbol{s}) e^{j\Theta}]}( \frac{n\pi - (s_2 - s_2')\omega_2}{s_1 - s_1'} , \omega_2) = 0,~\forall n\in \mathbb{Z}. \label{Eq:CF_zero2}
		   \end{equation}
		 Combining (\ref{Eq:CF_zero1}) and (\ref{Eq:CF_zero2}), we have
		 \begin{equation}
		 	{\text{\large$\varphi$}}_{[\pi(\boldsymbol{s}) e^{j\Theta}]}(\omega_1, \omega_2) = 0, ~~\forall\omega_1, \omega_2 \in (-\infty, \infty). \label{Eq:zero3}
		 \end{equation}
		 However, (\ref{Eq:zero3}) cannot be a valid characteristic function for any RV. Therefore, the auxiliary channel $\pi$ does not exist, and $\Pi$ is empty. Hence, the AVC channel is not $l$-symmerizable.
	\end{proof}
	Following Lemma \ref{Theorem:AVC}, the result in Theorem \ref{Theorem:symm} implies that the proposed SP-OFDM will always have positive capacity under any hostile jamming with finite average power constraint.	The next subsection is focused on how to calculate the channel capacity of SP-OFDM under hostile jamming.
	
	\subsection{Capacity Analysis}
	From Lemma \ref{Theorem:AVC}, the capacity of channel $R = S + e^{j \Theta} J + N$ is given by
	\begin{equation}
	\begin{array}{c}
	C = \underset{\mathcal{P}_{S}}{\max} ~\underset{F_J}{\min} ~ I(S, R), \vspace{0.1in}\\
	s.t.~~ \int_{\mathbb{C}} |\boldsymbol{x}|^2 \text{d} F_J(\boldsymbol{x}) \leq P_J.
	\end{array} \nonumber	 	
	\end{equation}
	It is hard to obtain a closed form solution of the channel capacity for a general discrete transmission alphabet $\Phi$. However, if we relax the distribution of the transmitted symbol $S$ from the discrete set $\Phi$ to the entire complex plane $\mathbb{C}$ under an average power constraint, we are able to obtain the following result on channel capacity.	
	\begin{theorem}\label{Theorem:capacity_thumb}
		The deterministic coding capacity of SP-OFDM is positive under any hostile jamming. More specifically, let the alphabet $\Phi = \mathbb{C}$ and the average power of $S$ being upper bounded by $P_S$, then the maximin channel capacity in (\ref{Eq:capacity0}) under average jamming power constraint $P_J$ and noise power $P_N = \sigma^2$ is given by
		\begin{equation}
			C = \log \left( 1 + \frac{P_S}{P_J + P_N}\right). \label{Eq:capacity_gauss}
		\end{equation}
		The capacity is achieved at input distribution $\mathcal{CN}(0, P_S)$ and jamming distribution $\mathcal{CN}(0, P_J)$.
	\end{theorem}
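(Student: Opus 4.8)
The plan is to read the maximin in (\ref{Eq:capacity0}) as a mutual-information game and to prove the formula by exhibiting a \emph{saddle point}, namely the Gaussian pair $S^\star \sim \mathcal{CN}(0, P_S)$ and $J^\star \sim \mathcal{CN}(0, P_J)$. First I would evaluate the objective at this candidate. Because $J^\star$ is circularly symmetric, $e^{j\Theta} J^\star$ has distribution $\mathcal{CN}(0, P_J)$ for every realization of $\Theta$, so the effective additive term $Z^\star = e^{j\Theta} J^\star + N$ is $\mathcal{CN}(0, P_J + P_N)$ and independent of $S$. The channel (\ref{Eq:AVC}) then collapses to an ordinary complex AWGN channel, giving $I(S^\star, R) = \log\left(1 + \frac{P_S}{P_J + P_N}\right)$, which matches (\ref{Eq:capacity_gauss}). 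It remains to verify the two saddle inequalities so that this value is simultaneously a maximin and a minimax.

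The \emph{easy} inequality fixes the Gaussian jammer $J^\star$. Then $Z^\star \sim \mathcal{CN}(0, P_J + P_N)$ and, for any input with $\mathbb{E}\{|S|^2\} \leq P_S$, the maximum-entropy bound $h(R) \leq \log\left(\pi e (P_S + P_J + P_N)\right)$ combined with $h(R \mid S) = h(Z^\star) = \log\left(\pi e (P_J + P_N)\right)$ yields $I(S, R) \leq \log\left(1 + \frac{P_S}{P_J + P_N}\right)$; hence the Gaussian input is optimal against the Gaussian jammer. The \emph{hard} inequality fixes the Gaussian input $S^\star$ and lets $J$ range over all jammers independent of $S$ with $\mathbb{E}\{|J|^2\} \leq P_J$. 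The effective noise $Z = e^{j\Theta} J + N$ satisfies $\mathbb{E}\{|Z|^2\} = \mathbb{E}\{|J|^2\} + P_N \leq P_J + P_N$, since the phase rotation preserves power and the cross term vanishes by independence of $N$; moreover $Z$ has a density thanks to the additive Gaussian $N$. Writing $I(S^\star, R) = h(S^\star + Z) - h(Z) = \log\left(N(S^\star + Z)/N(Z)\right)$ with $N(\cdot)$ the entropy power, the entropy power inequality gives $N(S^\star + Z) \geq N(S^\star) + N(Z)$, while the maximum-entropy property bounds $N(Z)$ by a fixed multiple of $\mathbb{E}\{|Z|^2\}$ and $N(S^\star)$ equals the same multiple of $P_S$. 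The common constant cancels in the ratio, leaving $I(S^\star, R) \geq \log\left(1 + \frac{P_S}{P_J + P_N}\right)$, so the Gaussian jammer is the least favorable against the Gaussian input.

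Together the two inequalities make $(S^\star, J^\star)$ a saddle point of $I(S, R)$, whence $\max_{\mathcal{P}_S} \min_{F_J} I = \min_{F_J} \max_{\mathcal{P}_S} I = \log\left(1 + \frac{P_S}{P_J + P_N}\right)$, establishing (\ref{Eq:capacity_gauss}); positivity is immediate and is also guaranteed by Theorem \ref{Theorem:symm} through Lemma \ref{Theorem:AVC}. I expect the hard inequality to be the main obstacle: the jammer is constrained only in power, not in distribution, and the phase randomization renders $Z$ generally non-circular, so one cannot simply invoke the scalar AWGN capacity. The entropy power inequality is the appropriate tool, and the key technical care is in checking that $Z$ possesses a density and finite differential entropy—both secured by the additive Gaussian term $N$—so that the EPI applies; the bound on $N(Z)$ encapsulates the fact that, among all power-limited noises, the isotropic Gaussian is the worst.
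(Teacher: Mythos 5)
Your proposal is correct and follows essentially the same route as the paper: both identify the Gaussian pair $(\mathcal{CN}(0,P_S),\mathcal{CN}(0,P_J))$ as a saddle point of $I(S,R)$, verify the max-side inequality by reducing the channel under Gaussian jamming to a complex AWGN channel, and verify the min-side inequality via the fact that Gaussian noise is least favorable for a Gaussian input. The only difference is that the paper cites this last fact from the literature while you supply its standard entropy-power-inequality proof, which is a welcome but not structurally different elaboration.
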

	To prove Theorem \ref{Theorem:capacity_thumb}, we need the following lemma \cite[Lemma 4]{Csiszar1992TIT}.
	\begin{lemma}\label{Lemma:convexconcave}
		Mutual information $I(S, R)$ is concave with respect to the input distribution $F_S(\cdot)$ and convex with respect to the jamming distribution $F_J(\cdot)$.
	\end{lemma}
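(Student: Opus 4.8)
The plan is to reduce the statement to two classical facts about mutual information—its concavity in the input distribution for a fixed channel, and its convexity in the channel transition law for a fixed input—and to verify that, for the jamming channel (\ref{Eq:AVC}), the relevant functionals depend on $F_S$ and $F_J$ in the right (affine) way. The starting point is to isolate the transition kernel of (\ref{Eq:AVC}). Given $S = \boldsymbol{s}$ and a jamming realization $J = \boldsymbol{x}$, averaging only over the phase $\Theta$ and the Gaussian noise $N$ yields a \emph{fixed} conditional density
\begin{equation}
W_0(\boldsymbol{r}\mid \boldsymbol{s}, \boldsymbol{x}) = \frac{1}{M}\sum_{m=0}^{M-1} g\!\left(\boldsymbol{r} - \boldsymbol{s} - e^{j 2\pi m/M}\boldsymbol{x}\right), \nonumber
\end{equation}
where $g$ is the $\mathcal{CN}(0,\sigma^2)$ density. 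The effective channel seen by the decoder is then $W(\boldsymbol{r}\mid\boldsymbol{s}) = \int_{\mathbb{C}} W_0(\boldsymbol{r}\mid\boldsymbol{s},\boldsymbol{x})\,\mathrm{d}F_J(\boldsymbol{x})$, which is affine in $F_J$ and does not depend on $F_S$; the output density $q(\boldsymbol{r}) = \int W(\boldsymbol{r}\mid\boldsymbol{s})\,\mathrm{d}F_S(\boldsymbol{s})$ is affine in each of $F_S$ and $F_J$ separately.

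For concavity in $F_S$, I would fix $F_J$ (hence $W$) and use the decomposition $I(S,R) = h(R) - h(R\mid S)$ into differential entropies. The conditional term $h(R\mid S) = \int \big(-\int W(\boldsymbol{r}\mid\boldsymbol{s})\log W(\boldsymbol{r}\mid\boldsymbol{s})\,\mathrm{d}\boldsymbol{r}\big)\,\mathrm{d}F_S(\boldsymbol{s})$ is linear in $F_S$, because the inner integral is a fixed function of $\boldsymbol{s}$ once $W$ is frozen. Since $q$ is affine in $F_S$ and the differential-entropy functional $p \mapsto -\int p\log p$ is concave, $h(R)$ is a concave function of $F_S$; thus $I = (\text{concave}) - (\text{linear})$ is concave in $F_S$.

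For convexity in $F_J$, I would fix $F_S$ and use the representation $I(S,R) = \int D\big(W(\cdot\mid\boldsymbol{s})\,\big\|\,q\big)\,\mathrm{d}F_S(\boldsymbol{s})$, i.e. an $F_S$-average of relative entropies between each row density $W(\cdot\mid\boldsymbol{s})$ and the output density $q$. Both arguments of each divergence are affine in $F_J$, and relative entropy $D(p\|q)$ is jointly convex in the pair $(p,q)$; convexity is therefore preserved under the affine substitution $F_J \mapsto (W(\cdot\mid\boldsymbol{s}), q)$ and under averaging against the fixed nonnegative measure $F_S$, which gives convexity of $I$ in $F_J$.

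The main obstacle is not the algebra but the passage to the continuous (differential-entropy) setting: the classical concavity/convexity arguments are typically phrased for finite alphabets, so I would need to confirm that $h(R)$, $h(R\mid S)$ and each $D\big(W(\cdot\mid\boldsymbol{s})\,\big\|\,q\big)$ are well defined and finite here. This is exactly where the average-power constraint $\int|\boldsymbol{x}|^2\,\mathrm{d}F_J \le P_J$ and the fact that Gaussian smoothing by $g$ makes every $W(\cdot\mid\boldsymbol{s})$ a bounded, strictly positive density become essential, as they justify interchanging integrations and applying Jensen's inequality to the entropy and divergence functionals. Once finiteness is secured the two convexity facts apply verbatim and the result follows; alternatively, one may simply invoke \cite[Lemma 4]{Csiszar1992TIT}.
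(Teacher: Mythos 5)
Your proof is correct, but it takes a genuinely different route from the paper: the paper offers no argument for this lemma at all, simply importing it as Lemma 4 of \cite{Csiszar1992TIT} (the fallback you mention in your final sentence). Your proposal proves it from first principles, and both pillars are sound: (i) with $F_J$ frozen the effective kernel $W(\boldsymbol{r}\mid\boldsymbol{s})$ is fixed, so $h(R\mid S)$ is linear in $F_S$ while $h(R)$ is the concave entropy functional composed with the affine map $F_S \mapsto q$, giving concavity; (ii) with $F_S$ frozen, both $W(\cdot\mid\boldsymbol{s})$ and $q$ are affine images of $F_J$, so the representation $I(S,R)=\int D\bigl(W(\cdot\mid\boldsymbol{s})\,\|\,q\bigr)\,\mathrm{d}F_S(\boldsymbol{s})$ together with joint convexity of relative entropy yields convexity. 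A useful byproduct of your argument is that it makes transparent that the secure phase $\Theta$ plays no role in this lemma: it is absorbed into the fixed kernel $W_0$, so the concave/convex structure holds for any channel of the form $\int W_0(\boldsymbol{r}\mid\boldsymbol{s},\boldsymbol{x})\,\mathrm{d}F_J(\boldsymbol{x})$. What the citation buys the paper is rigor on exactly the point you flag as the main obstacle, namely well-definedness and finiteness of the differential entropies and divergences on continuous alphabets under the moment constraints; your sketch of that step (bounded, strictly positive Gaussian-smoothed densities plus the second-moment constraints on $S$ and $J$) is the right idea, but it would need to be fleshed out to be fully rigorous, whereas Csisz\'ar and Narayan handle it carefully in the cited source.
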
    
	\begin{proof}[Proof of Theorem \ref{Theorem:capacity_thumb}]
		First, following Lemma \ref{Theorem:AVC} and Theorem \ref{Theorem:symm}, we can get that the deterministic coding capacity of SP-OFDM is positive under any hostile jamming. 
		
		Second, we will evaluate the channel capacity of SP-OFDM under hostile jamming. When the support of $S$ is $\Phi = \mathbb{C}$, the whole complex plane, following Lemma \ref{Theorem:AVC}, the channel capacity in (\ref{Eq:capacity0}) equals
		\begin{eqnarray}
				 &C = \underset{F_S}{\max} ~\underset{F_J}{\min} ~ I(S, R), \vspace{0.1in} \label{Eq:capacity1}\\
				 s.t.& \int_{\mathbb{C}} |\boldsymbol{x}|^2 \text{d} F_S(\boldsymbol{x}) \leq P_S, \label{Eq:input_constraint} \\
				 & \int_{\mathbb{C}} |\boldsymbol{x}|^2 \text{d} F_J(\boldsymbol{x}) \leq P_J, \label{Eq:jamming_constraint}
		\end{eqnarray}
		where $F_S(\cdot)$ denotes the CDF function of $S$ defined on $\mathbb{C}$, and (\ref{Eq:input_constraint}) and (\ref{Eq:jamming_constraint}) denote the average power constraints on the input and the jamming, respectively.
		
		 We denote the $I(S, R)$ w.r.t the input distribution $F_S(\cdot)$ and the jamming distribution $F_J(\cdot)$ by $\phi(F_S, F_J)$.  Following Lemma \ref{Lemma:convexconcave}, $\phi(F_S, F_J)$ is concave w.r.t. $F_S(\cdot)$ and convex w.r.t. $F_J(\cdot)$. As shown in \cite{Borden1985JCO}, if we can find the input distribution $F_S^*$ and the jamming distribution $F_J^*$ such that
		\begin{equation}
		\phi (F_S, F_J^*) \leq \phi (F_S^*, F_J^*) \leq \phi (F_S^*, F_J),
		\end{equation}
		for any $F_S$ and $F_J$ satisfying the average power constraints (\ref{Eq:input_constraint}) and (\ref{Eq:jamming_constraint}), respectively, then
		\begin{equation}
		\phi (F_S^*, F_J^*) = C.
		\end{equation}
		That is, the pair $(F_S^*, F_J^*)$ is  the saddle point of the max-min problem  in equation (\ref{Eq:capacity1}) \cite{Du1995}.
		
		Assume the jamming interference is circularly symmetric complex Gaussian with average power $P_J$, that is, $F_J^* = \mathcal{CN}(0, P_J)$. Note that the phase shift would not change the distribution of a complex Gaussian RV, and the fact that the jamming $J$ and the noise $N$ are independent, hence the jammed channel in this case is equivalent to a complex AWGN channel with noise power $P_J + P_N$, where the capacity achieving input distribution is also a complex Gaussian with power $P_S$, that is, $F_S^* = \mathcal{CN}(0, P_S)$. It follows that for any input distribution $F_S$ satisfying the power constraint $P_S$,
		\begin{equation}
			\phi (F_S, \mathcal{CN}(0, P_J)) \leq \phi (\mathcal{CN}(0, P_S), \mathcal{CN}(0, P_J)).
		\end{equation}
		
		On the other hand, when the input distribution is $F_S^* = \mathcal{CN}(0, P_S)$, the worst noise in terms of capacity for Gaussian input is Gaussian \cite{Gamal2012}. Since $e^{j \Theta} J + N$ is complex Gaussian with power $P_J + P_N$ if $F_J^* = \mathcal{CN}(0, P_J)$,  then for any jamming distribution $F_J$ satisfying the power constraint $P_J$, 
		\begin{equation}
		\phi (\mathcal{CN}(0, P_S), \mathcal{CN}(0, P_J)) \leq \phi (\mathcal{CN}(0, P_S), F_J).
		\end{equation}  
		
		 So the saddle point $(F_S^*, F_J^*)$ is achieved at $(\mathcal{CN}(0, P_S)$, $\mathcal{CN}(0, P_J))$, where the corresponding channel capacity is
		\begin{equation}
			C = \log \left( 1 + \frac{P_S}{P_J + P_N}\right),
		\end{equation}
		which completes the proof.
	\end{proof}

	\begin{table*}
	\scriptsize
	\caption{SP-OFDM parameters in numerical results ($T_s$: duration of OFDM body)}\label{Tab:params}
	\centering
	\begin{tabular}{c|c||c|c||c|c}
		\hline
		\hline
		Carrier number $N_c$ & 128 & CP1 duration $T_{CP, 1}$ & $T_s/8$ & CP2 duration $T_{CP,2}$ & $T_s/16$ \\
		Number of candidate phase shift offset $|\mathcal{K}|$ & 50 & Signal-to-noise ratio (dB) & 15 & Phase shift constellation size $M$ & 16 \\
		\hline 
	\end{tabular}
\end{table*}	

	\section{Numerical Results} \label{Sec:Simulation}
	In this section, we evaluate the synchronization and bit error rate (BER) performances of the proposed SP-OFDM system under disguised jamming attacks through numerical examples. Throughout this section, we consider the case where the malicious user generates disguised jamming using OFDM, with the same format and power level as that of the legitimate signal.

		\begin{figure}
		\centering
		\includegraphics[width=0.9\columnwidth]{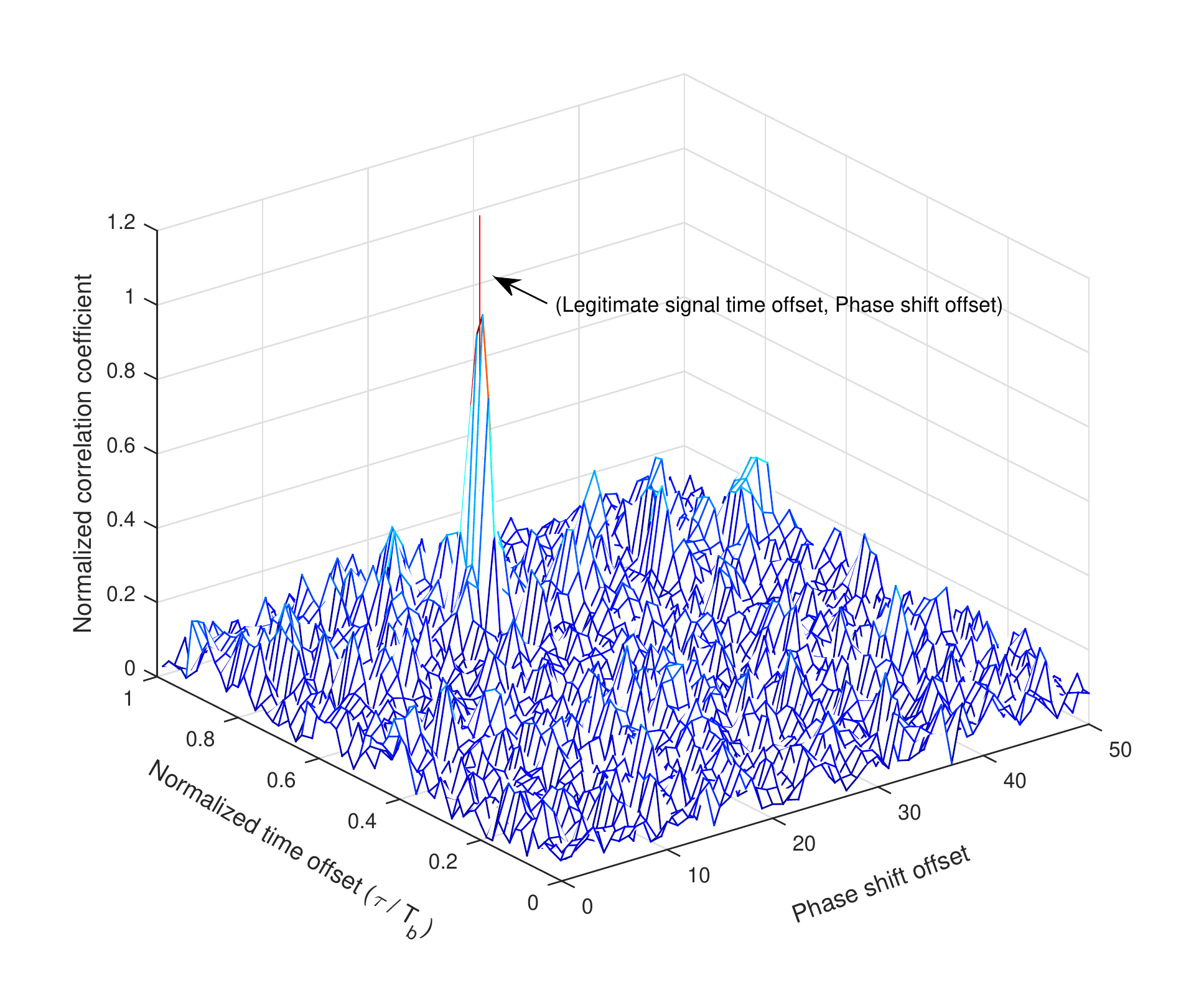}
		\caption{Correlation coefficients of SP-OFDM at different time and phase shift sequence offsets under disguised jamming.} \label{Fig:timeAWGN}
	\end{figure}
	
		\begin{figure}
		\centering
		\begin{subfigure}{.5\columnwidth}
			\centering
			\includegraphics[width=\columnwidth]{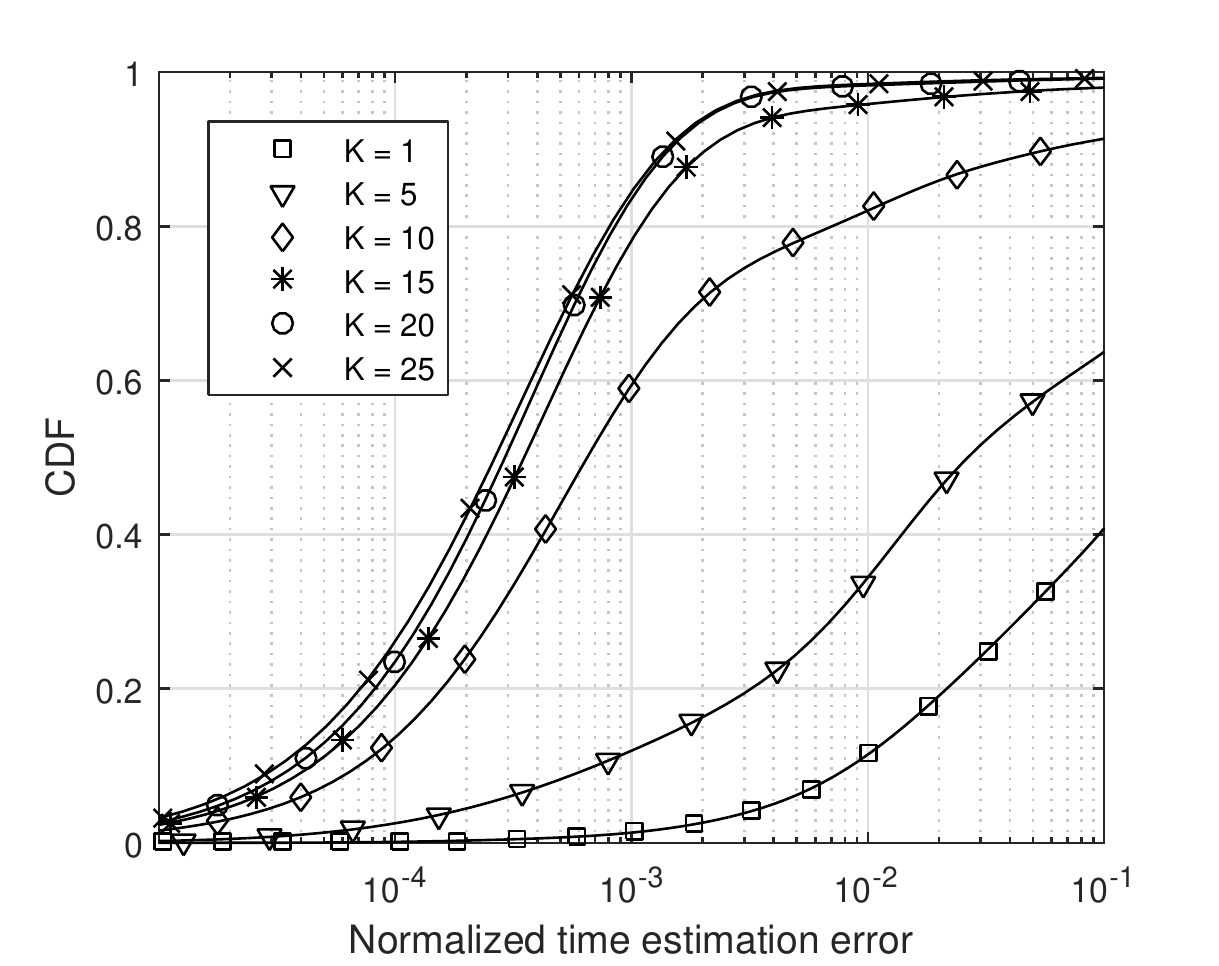}
		\end{subfigure}%
		\begin{subfigure}{.5\columnwidth}
			\centering
			\includegraphics[width=\columnwidth]{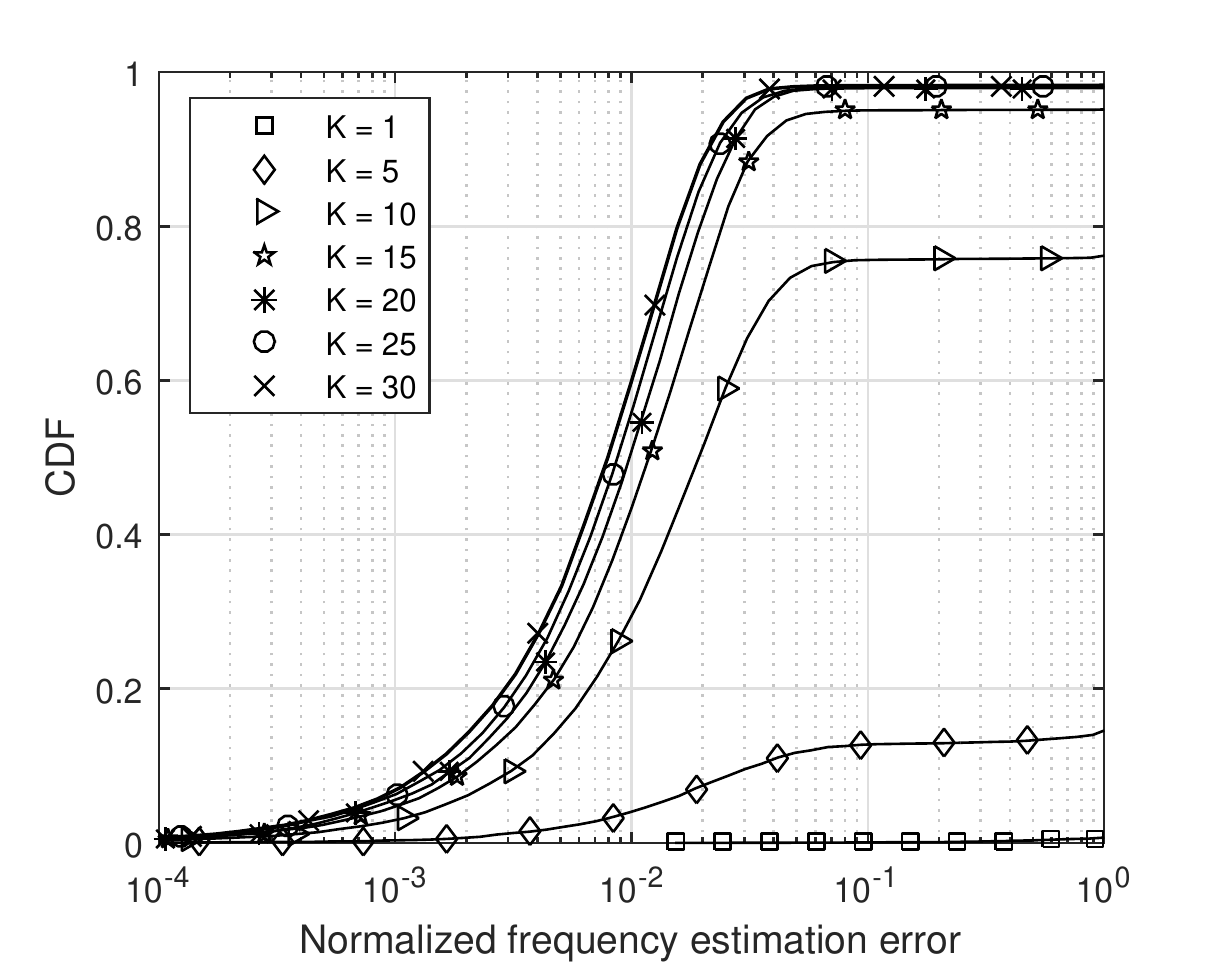}
		\end{subfigure}
		\caption{The synchronization error distribution under AWGN channels with disguised jamming attack.}
		\label{Fig:errorAWGN}
	\end{figure}
		
	\textbf{Example 1: Synchronization performance under disguised jamming in AWGN channels:} In this example, we verify the robustness of SP-OFDM under disguised jamming in terms of synchronization for AWGN channels. The system parameters are listed in Table \ref{Tab:params}. We first compute the average correlation coefficients at different time offsets and phase shift sequence offsets for the received signal as in (\ref{Eq:Y}), and the result is plotted in Fig. \ref{Fig:timeAWGN} for $K=40$\footnote{In the 802.11a WLAN \cite{IEEE80211}, 40 OFDM blocks correspond to 1440 data bytes with 64QAM mapping,  while the OFDM frame length can be as large as 2312 bytes.}. Here, $K$ denotes the number of OFDM blocks used for estimation. It shows that with the secure precoding scheme, even under disguised jamming, the receiver is able to correctly estimate the time offset as well as the phase shift sequence offset of the legitimate signal. Then we simulate the synchronization accuracy of SP-OFDM by calculating the cumulative distribution functions (CDFs) of the estimation errors with different numbers of OFDM blocks $K$ to average the correlation coefficients. We normalize the time offset by the duration of one OFDM block $T_b$ and the frequency offset by the sub-carrier spacing $1/T_s$, and the results are shown in Fig. \ref{Fig:errorAWGN}.  It can be observed that under the given setup, with $25$ OFDM blocks to compute the correlation coefficients, the synchronization algorithm is robust under disguised jamming, where $99\%$ cases have less than $0.01$ \emph{normalized} time offset estimation errors and $98\%$ cases have less than $0.04$ \emph{normalized} frequency offset estimation errors.    
	
		\begin{figure}
		\centering
		\begin{subfigure}{.5\columnwidth}
			\centering
			\includegraphics[width=\columnwidth]{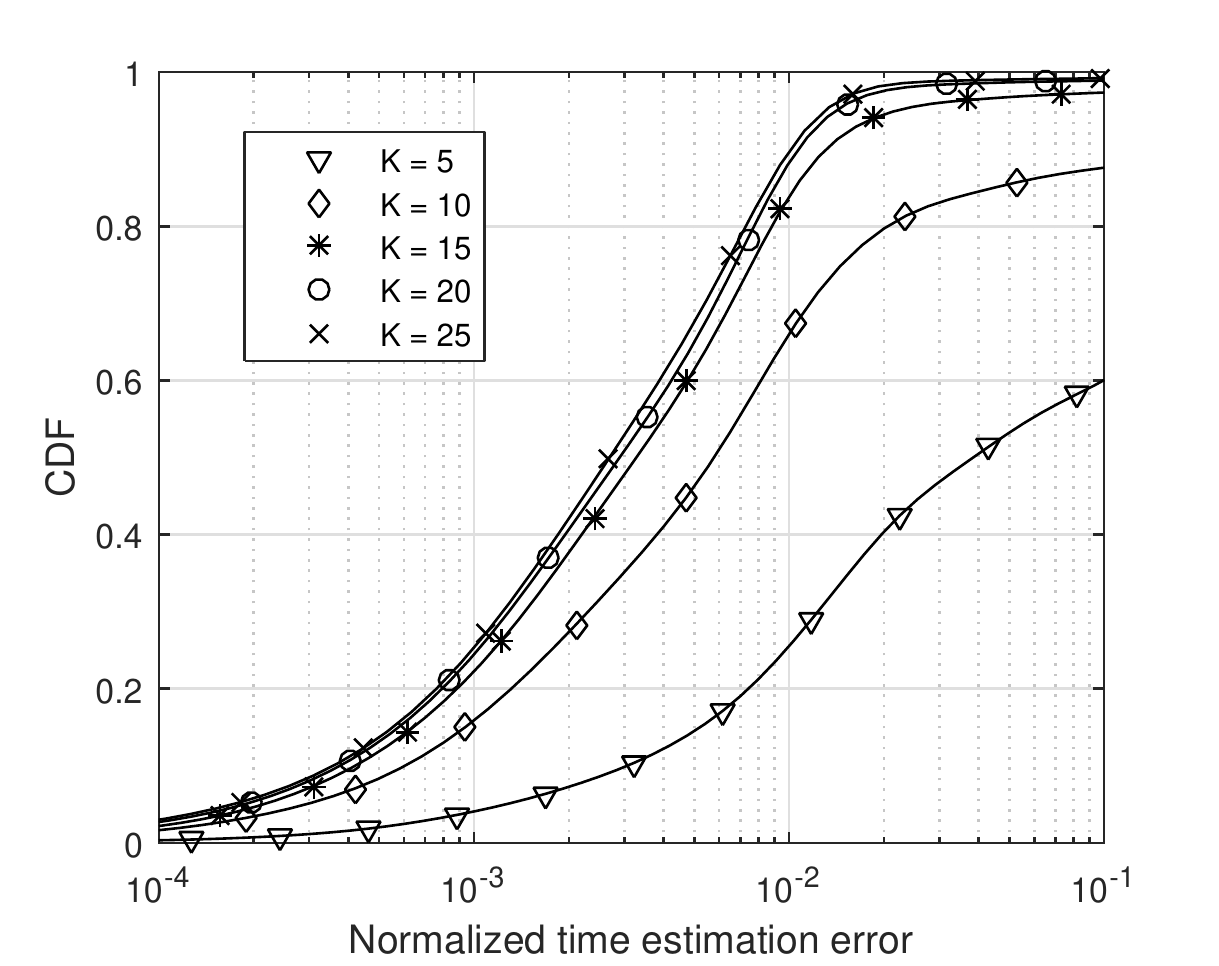}
		\end{subfigure}%
		\begin{subfigure}{.5\columnwidth}
			\centering
			\includegraphics[width=\columnwidth]{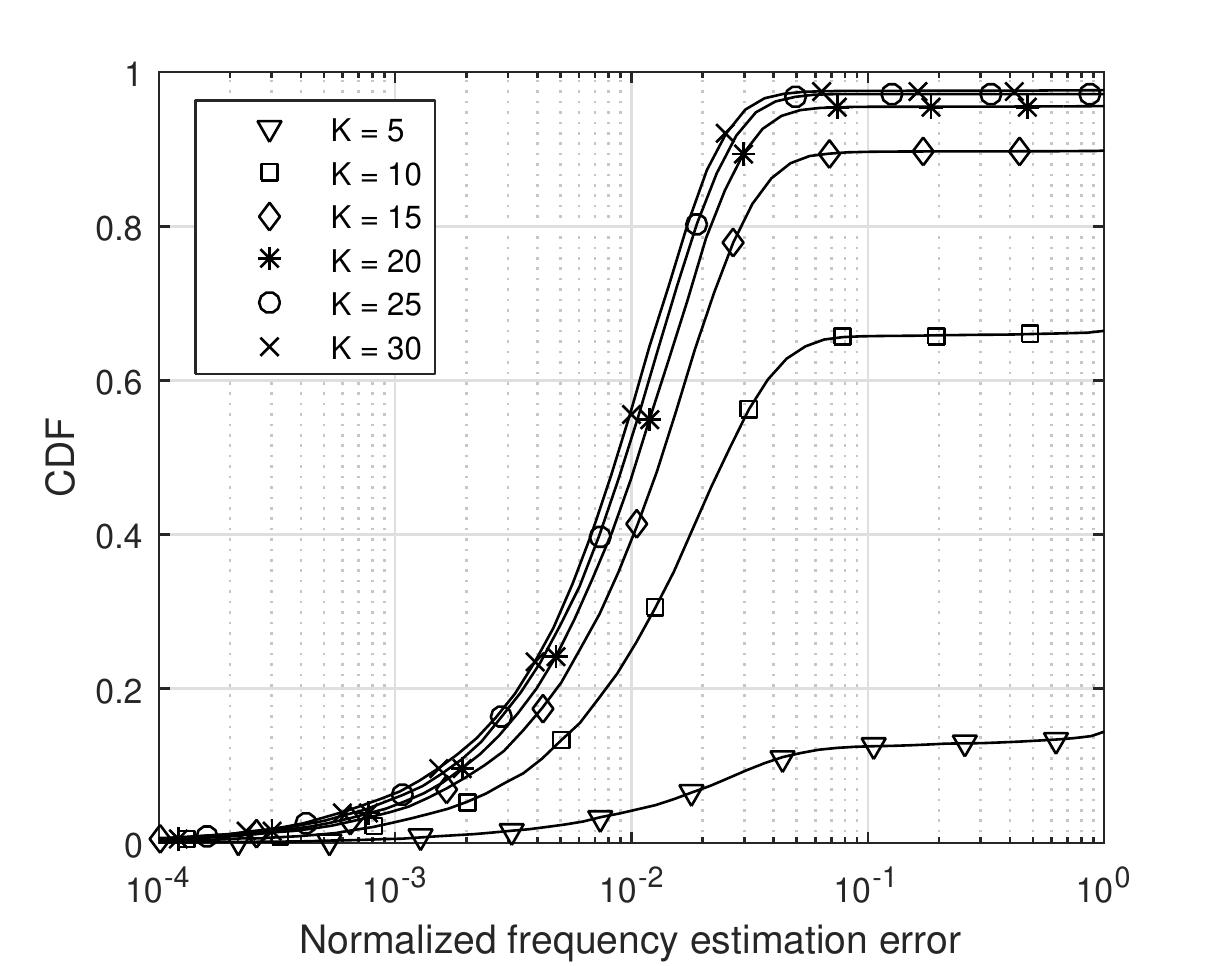}
		\end{subfigure}
		\caption{The synchronization error distribution under static multi-path fading channels with disguised jamming attack.}
		\label{Fig:errorMP}
	\end{figure}
	
	\begin{figure}
		\centering
		\begin{subfigure}{.5\columnwidth}
			\centering
			\includegraphics[width=\columnwidth]{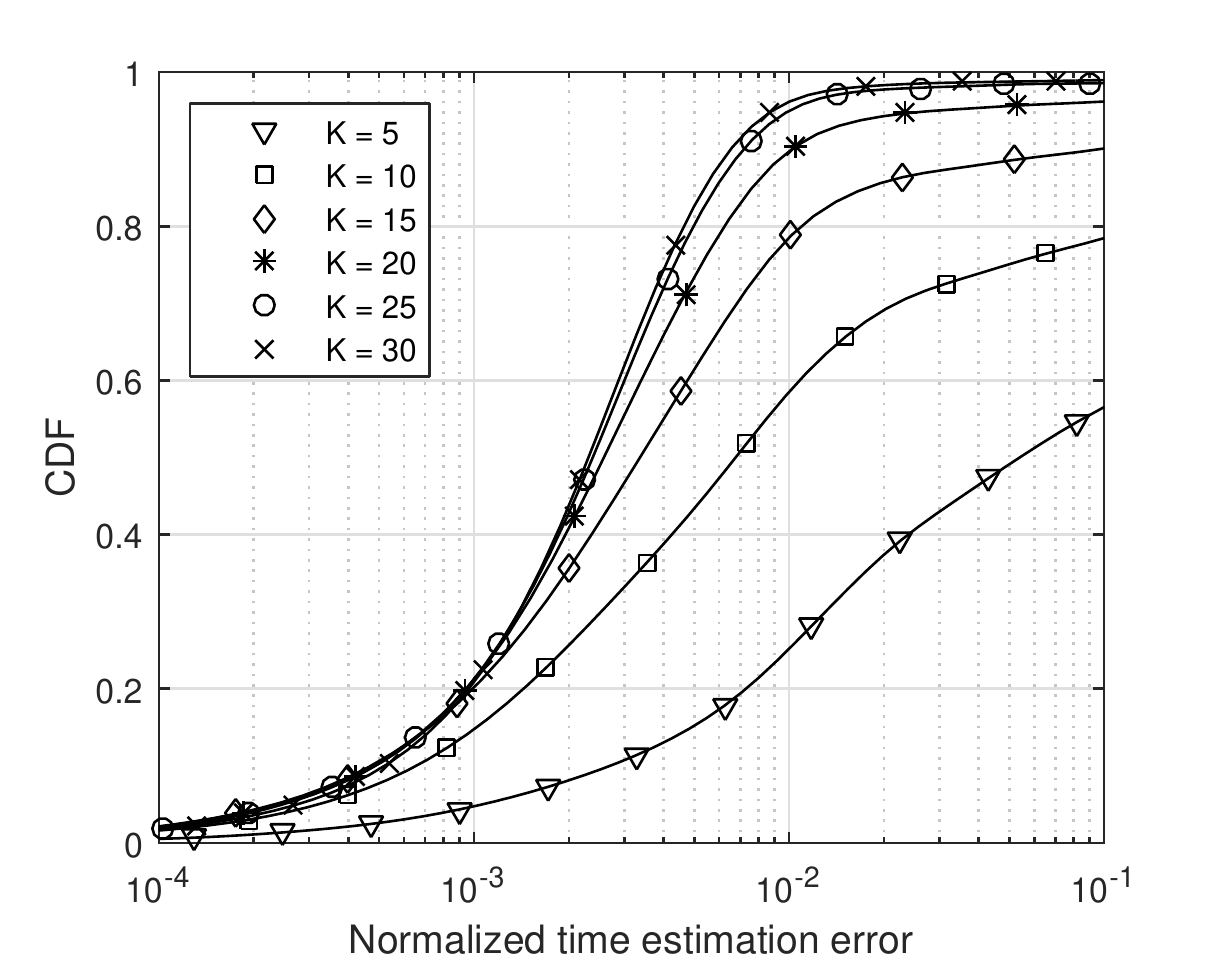}
		\end{subfigure}%
		\begin{subfigure}{.5\columnwidth}
			\centering
			\includegraphics[width=\columnwidth]{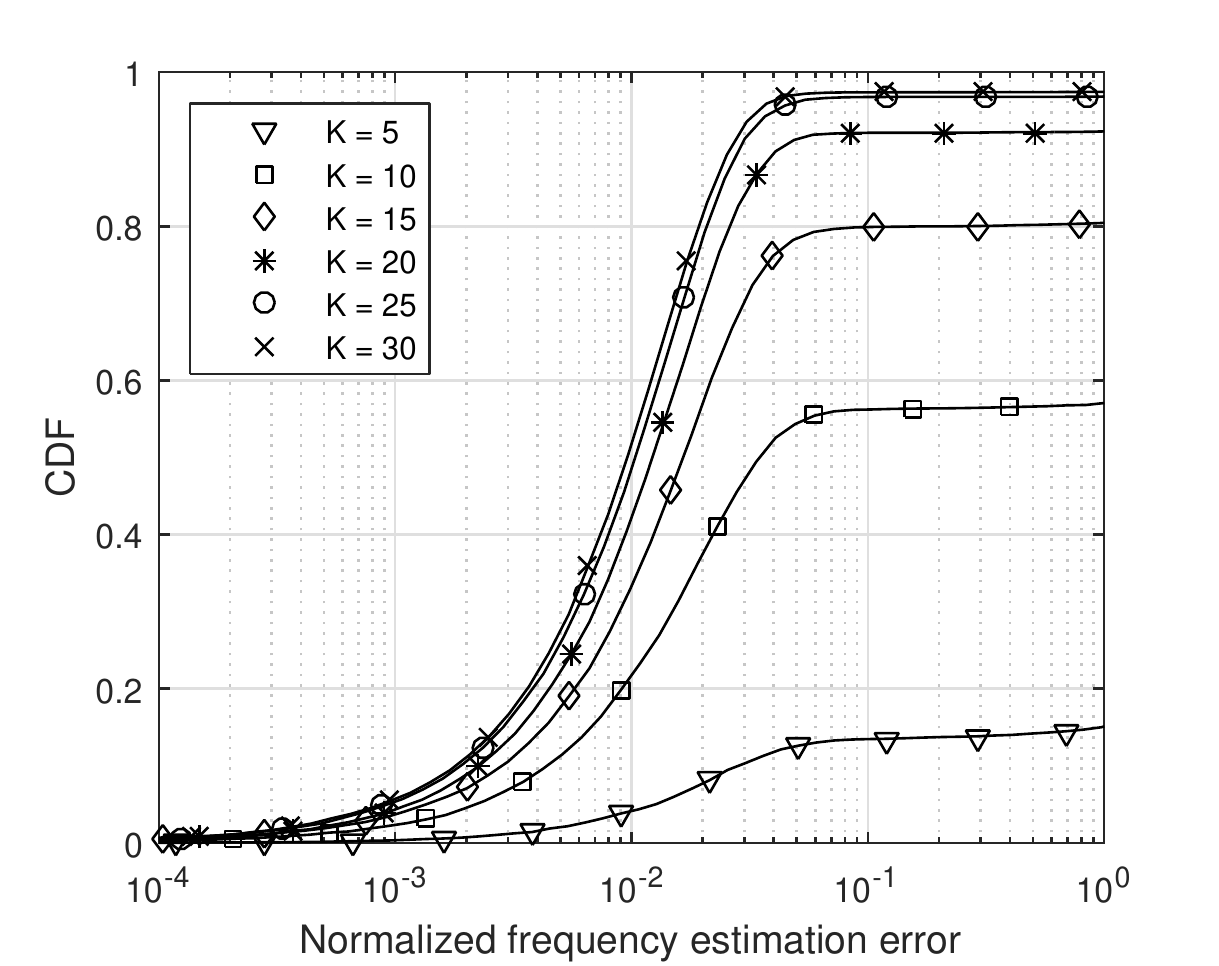}
		\end{subfigure}
		\caption{The synchronization error distribution under time varying multi-path fading channels with disguised jamming attack.}
		\label{Fig:errorMPD}
	\end{figure}

	\textbf{Example 2: Synchronization performance under disguised jamming in multi-path fading channels:} In this example, we simulate the synchronization accuracy of SP-OFDM under disguised jamming in static and time varying multi-path fading channels, which are modeled as 4 paths fading channels with a maximum delay spread of $3T_s/256$.  Fig. \ref{Fig:errorMP} shows the estimation error distribution in the static channel.  A slight performance loss is observed compared with the AWGN case, where $98\%$ cases have  less than $0.02$ \emph{normalized} time offset estimation errors and $96.5\%$ cases have less than $0.04$ \emph{normalized} frequency offset estimation errors using $25$ OFDM blocks in estimation. To demonstrate the effectiveness of the synchronization algorithm under slow time varying channels, we introduce a Doppler shift to each path with a maximum value of $2\%$ sub-carrier spacing ($0.02/T_s$) in the multi-path fading channel. Fig. \ref{Fig:errorMPD} shows the estimation error distribution under the time-varying multi-path fading channel, where around $98\%$ cases have  less than $0.02$ normalized time offset estimation errors and $96.5\%$ cases have less than $0.04$ normalized frequency offset estimation errors using $30$ OFDM blocks in estimation. The simulation results illustrate the robustness of SP-OFDM against disguised jamming attacks under various channel conditions.

	\begin{figure}[t]
		\centering
		\includegraphics[width=\columnwidth]{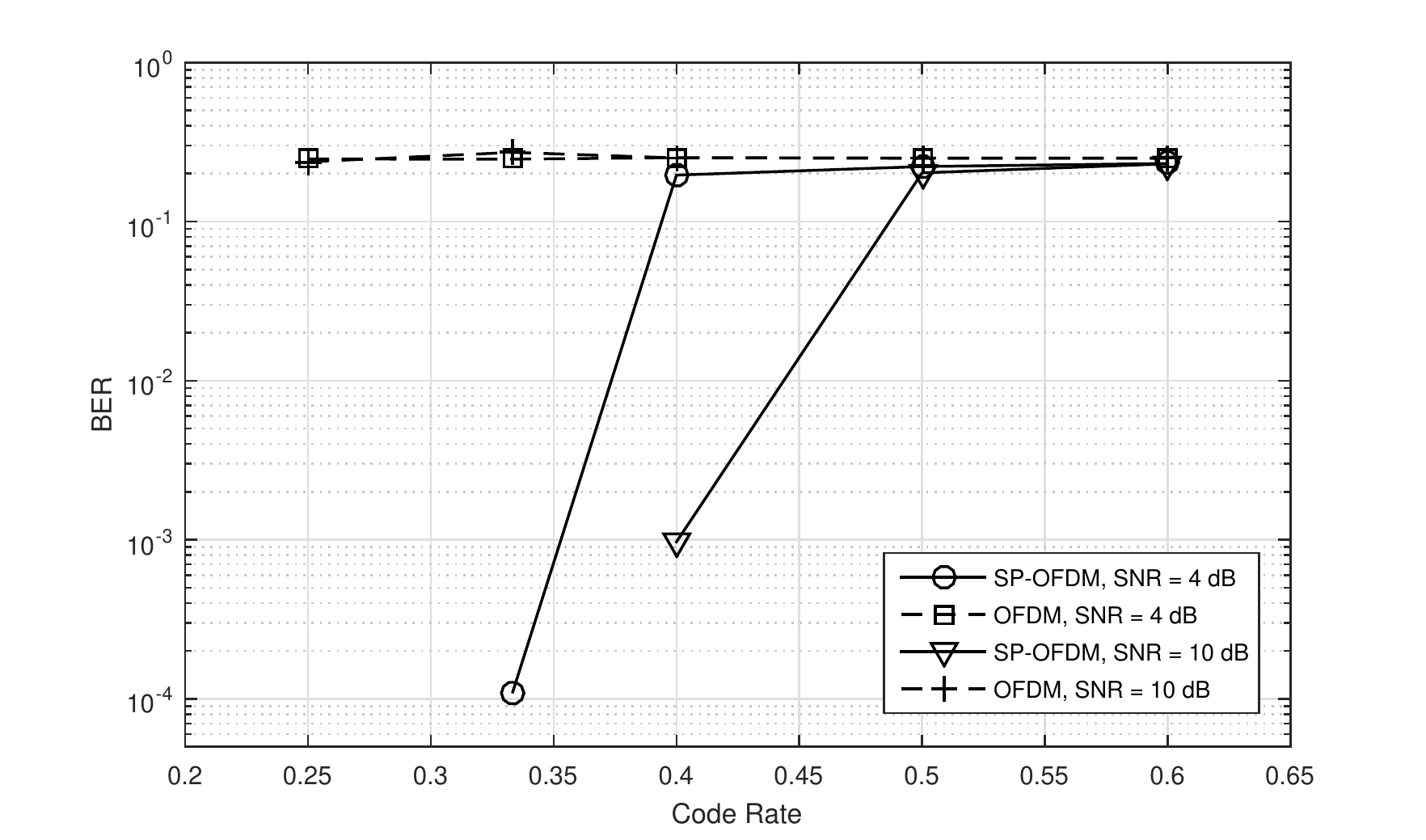}
		\caption{BER performance comparison under disguised jamming in AWGN channels: SP-OFDM versus the traditional OFDM system, signal to jamming power ratio (SJR) = 0 dB.}\label{Fig:BER}	
	\end{figure}
	
	\textbf{Example 3: BER performance under disguised jamming in AWGN channels:} In this example, we analyze the bit error rate (BER) of the proposed system under disguised jamming in AWGN channels. Perfect synchronization is assumed. We use the low density parity check (LDPC) codes for channel coding, and adopt the parity check matrices from the DVB-S.2 standard \cite{Morello2006IEEE}. The coded bits are mapped into QPSK symbols.  The random phase shifts in the proposed secure precoding are approximated as i.i.d. continuous RVs uniformly distributed over $[0, 2\pi)$. We observe that such an approximation has negligible difference on BER performance compared with a sufficiently large $M$. The jammer randomly selects one of the codewords in the LDPC codebook and sends it to the receiver after the mapping and modulation. On the receiver side, we use a soft decoder for the LDPC codes, where the belief propagation (BP) algorithm \cite{Young2001TIT} is employed. The likelihood information in the BP algorithm is calculated using the likelihood function of a general Gaussian channel, where the noise power is set to $1+\sigma^2$ considering the existence of the disguised jamming, and $\sigma^2$ is the noise power. That is, the signal to jamming power ratio (SJR) is set to be 0 dB. It should be noted that for more complicated jamming distributions or mapping schemes, customized likelihood functions basing on the jamming distribution will be needed for the optimal performance.  Fig. \ref{Fig:BER} compares the BERs of the communication system studied with and without the proposed secure precoding under different code rates and SNRs. It can be observed that: (i) under the disguised jamming, in the traditional OFDM system, the BER cannot really be reduced by decreasing the code rate or the noise power, which indicates that without appropriate anti-jamming procedures, the traditional OFDM cannot achieve reliable communications under disguised jamming; (ii) with the proposed SP-OFDM scheme, when the code rates are below certain thresholds, the BER can be significantly reduced with the decrease of code rates using the proposed secure precoding. This demonstrates that the proposed SP-OFDM system can achieve a positive deterministic channel coding capacity under disguised jamming.

	\begin{figure}[t]
		\centering
		\includegraphics[width=\columnwidth]{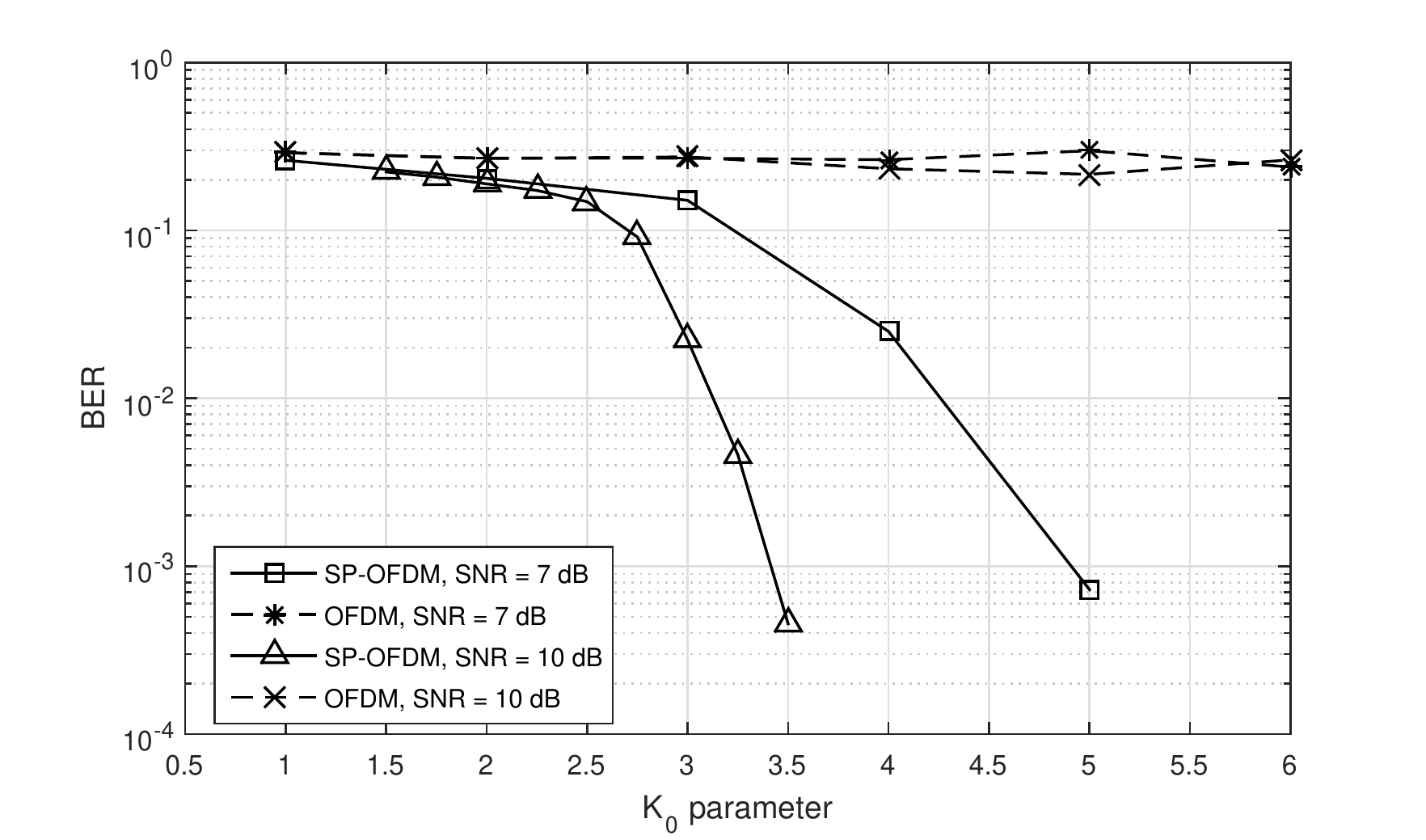}
		\caption{BER performance comparison under disguised jamming in Rician channels: code rate $= 1/3$, SJR = 0 dB.  Here the $K_0$ parameter refers to the power ratio between the direct path and the scattered path.}\label{Fig:Rician}
	\end{figure}

	\textbf{Example 4: BER performance under disguised jamming in Rician channels:} In this example, we verify the effectiveness of the proposed system in fading channels. We consider a Rician channel, where the multipath interference is introduced and a strong line of sight (LOS) signal exists \cite{Rappaport2001}. The fading effect is slow enough so that the channel remains unchanged for one OFDM symbol duration. In the simulation, we set the power of the direct path of Rician channel to be $1$ and vary the $K_0$ parameter, which is the ratio between the power of the direct path and that of the scattered path. Fig. \ref{Fig:Rician} shows the BERs for LDPC code rate $1/3$ under disguised jamming. It can be observed that the proposed system is still effective under the fading channel with a sufficient large $K_0$ parameter. For a small $K_0$ parameter, i.e., when the fading is severe, channel estimation and equalization will be needed to guarantee a reliable communication.    

	\section{Conclusions}\label{Sec:Conclude}
	In this paper, we designed a highly secure and efficient OFDM system under disguised jamming, named securely precoded OFDM (SP-OFDM), by exploiting secure symbol-level precoding basing on phase randomization. We demonstrated the destructive effect of disguised jamming on the traditional OFDM system, and proved the robustness of SP-OFDM against disguised jamming in terms of synchronization and channel capacity.  First, we showed that the traditional OFDM cannot distinguish between the legitimate signal and disguised jamming in the synchronization process, while SP-OFDM, with the secure CP, can achieve accurate synchronization under disguised jamming. Second, we analyzed the channel capacity of the traditional OFDM and the proposed SP-OFDM under hostile jamming using the arbitrarily varying channel (AVC) model. It was shown that the deterministic coding capacity of the traditional OFDM is zero under the worst disguised jamming; on the other hand, with the secure randomness shared between the authorized transmitter and receiver, the AVC channel corresponding to SP-OFDM is not symmetrizable, and hence SP-OFDM can achieve a positive capacity under disguised jamming. Both our theoretical and numerical results demonstrated that SP-OFDM is robust under disguised jamming and frequency selective fading. Potentially, SP-OFDM is a promising modulation scheme for high speed transmission under hostile environments, and the secure precoding scheme proposed in this paper can also be applied to modulation techniques other than OFDM. 
	
	
	\appendix

	\subsection{Proof of Proposition \ref{Prop:pre}}
	\begin{proof}
	Note that $r(t) r^*(t + T_s)$ can be calculated as
	\begin{align}
	&r(t) r^*(t + T_s) = s(t - t_0) s^*(t + T_s - t_0) e^{-j \omega_0 T_s } \nonumber \\
	&~~~~+ z(t) s^*(t + T_s - t_0) e^{-j(\omega_0 t + \omega_0 T_s + \phi_0)} \nonumber \\
	&~~~~+ s(t - t_0) e^{j(\omega_0 t + \phi_0)} z^*(t+T_s) +  z(t)z^*(t+T_s). \label{Eq:rr}
	\end{align}
	In the following we analyze the four terms on the right-hand-side (RHS) of (\ref{Eq:rr}) respectively.
	
	First, define
	\begin{equation}
	Y_{k, 1}(\tau) \overset{\triangle}{=}  \int_{\tau - T_{CP} + k T_b}^{\tau - T_{CP,2} + k T_b} s(t - t_0) s^*(t + T_s - t_0)  \text{d} t, 
	\end{equation}
	for $k \in \mathbb{Z}^*, \tau \in [0, T_b)$. We evaluate the expectation of $Y_{k, 1}(\tau)\tilde{C}^*_{k+d}$ for $d \in \mathcal{K}$. Note that for $t \in [\tau - T_{CP} + k T_b, \tau - T_{CP,2} + k T_b] $, where $\tau \in [0, T_b)$, we have
	\begin{align}
	s(t - t_0) =& \sum_{l = k-1}^{k+1} s_l(t - t_0 - l T_b),  \\
	s(t + T_s - t_0) =& \sum_{l = k-1}^{k+1} s_l(t + T_s - t_0 - l T_b).
	\end{align}
	Note that since the OFDM blocks are zero-mean and independent, for $k_1 \neq k_2$,  we have 
	\begin{equation}
	\mathbb{E} \left\lbrace s_{k_1}(t_1) s^*_{k_2}(t_2) \right\rbrace = 0, \forall t_1, t_2 \in \mathbb{R}.
	\end{equation}
	So we focus on
	{\small\begin{align}
	&\int_{\tau - T_{CP} + k T_b}^{\tau - T_{CP,2} + k T_b} \!\sum_{l = k-1}^{k+1}\! s_l(t - t_0 - l T_b)s^*_l(t + T_s - t_0 - l T_b)~ \text{d}t \nonumber \\
	=&\frac{1}{N_c^2} \sum_{l = -1}^{1} \int_{\tau - l T_b - t_0 - T_{CP} }^{\tau - l T_b -t_0 - T_{CP,2} } \sum_{i_1=0}^{N_c - 1} \tilde{S}_{l+k, i_1} e^{j\frac{2\pi i_1}{T_s} t  } u_{l+k}(t ) \nonumber \\
	&~~~~~~~~~~~~~~~~~~~~~ \sum_{i_2=0}^{N_c - 1} \tilde{S}^*_{l+k, i_2} e^{-j\frac{2\pi i_2}{T_s} t  } u^*_{l+k}(t + T_s ) ~\text{d}t. \label{Eq: Y1_tmp1}
	\end{align}}
	Since for $i_1 \neq i_2$, $\mathbb{E} \{\tilde{S}_{k, i_1} \tilde{S}_{k, i_2}^* \} = 0$, we further focus on 
	\begin{equation}
	\small
	\frac{1}{N_c^2}\sum_{l = -1}^{1} \sum_{i=0}^{N_c - 1} |\tilde{S}_{l+k, i}|^2 \int_{\tau  - l T_b - t_0 - T_{CP} }^{\tau  - l T_b -t_0 - T_{CP,2} }   u_{l+k}(t ) u^*_{l+k}(t + T_s ) ~\text{d}t. \label{Eq: Y1_tmp2}
	\end{equation}
	
	Define function $v_k(\tau)$ as
	\begin{align}
	v_k(\tau)  \overset{\triangle}{=}& \int_{\tau  - T_{CP} }^{\tau  - T_{CP,2} }   u_k(t ) u^*_k(t + T_s ) \text{d}t \nonumber\\
	=& 
	\left\lbrace \begin{array}{rl}
	(\tau + T_{CP, 1}) C_k  , & -T_{CP, 1} \leq \tau < 0, \\
	\tau + (T_{CP, 1} - \tau) C_k, & 0 \leq \tau < T_{CP, 2}, \\
	T_{CP, 2} +  (T_{CP, 1} - \tau) C_k, & T_{CP, 2} \leq \tau < T_{CP, 1}, \\
	T_{CP} - \tau, & T_{CP, 1} \leq \tau < T_{CP}, \\
	0, &  otherwise.
	\end{array}
	\right. 
	\end{align}
	So (\ref{Eq: Y1_tmp2}) can be expressed as
	\begin{equation}
	\frac{1}{N_c^2} \sum_{l = -1}^{1} \sum_{i=0}^{N_c - 1} |\tilde{S}_{l+k, i}|^2 v_{l+k}(\tau - l T_b - t_0).
	\end{equation}
	In addition, since the phase shift symbols are zero-mean and independent, for $\tau  \in \mathbb{R}$,  we have
	\begin{equation}
	\mathbb{E}\{ v_{k_1}(\tau)  C_{k_2}^* \} = \left\lbrace \begin{array}{rl}
	v(\tau),& k_1 = k_2, \\
	0, &  k_1 \neq k_2.
	\end{array} 
	\right. 	
	\end{equation}
	So the expectation of $Y_{k, 1}(\tau) \tilde{C}^*_{k+d}$ is
	\begin{equation}
	\mathbb{E}\{ Y_{k, 1}(\tau) \tilde{C}^*_{k+d} \} =  \left\lbrace 
	\begin{array}{rl}
	\frac{P_S}{N_c}v(\tau + T_b - t_0),& d = k_0 - 1, \\
	\frac{P_S}{N_c}v(\tau - t_0),& d = k_0, \\
	\frac{P_S}{N_c}v(\tau - T_b - t_0),& d = k_0 + 1, \\
	0,& otherwise.				 
	\end{array}
	\right. 
	\end{equation}
	whose maximum is achieved at $\tau = t_0$ and $d = k_0$. In addition, since constellation $\Phi$ is a finite set, the variance of $Y_{k, 1}(\tau) \tilde{C}^*_{k+d}$ is bounded for any possible $k, \tau$ and $d$, while given $\tau$ and $d$, 
	\begin{equation}
	\mathbb{E} \{Y_{k_1, 1}(\tau) \tilde{C}^*_{k_1+d} Y_{k_2, 1}(\tau) \tilde{C}^*_{k_2+d} \} = 0, \text{ for } |k_1 - k_2| > 1.
	\end{equation}
	So as $K \rightarrow \infty$, the variance of $\frac{1}{K} \sum_{k = 0}^{K-1} Y_{k, 1}(t_0) \tilde{C}^*_{k+k_0}$ converges to $0$, and using the Chebychev inequality, we have 
	\begin{equation}
	\frac{1}{K} \sum_{k = 0}^{K-1} Y_{k, 1}(t_0) \tilde{C}^*_{k+k_0} = \frac{P_S T_{CP, 1}}{N_c}, a.s..
	\end{equation} 
	
	Second, define 
	\begin{equation}
	\small
	Y_{k, 2}(\tau) \overset{\triangle}{=} \int_{\tau - T_{CP} + k T_b}^{\tau - T_{CP,2} + k T_b} z(t) s^*(t + T_s - t_0) e^{-j(\omega_0 t + \omega_0 T_s + \phi_0)} \text{d}t,  
	\end{equation}
	and
	\begin{equation}
	\small
	Z_{k, l}(\omega, \tau, t_0) \overset{\triangle}{=} \int_{\tau - T_{CP}}^{\tau - T_{CP,2} } z(t + k T_b)   e^{j\omega t } u_l(t - l T_b + T_s - t_0 )  \text{d}t.
	\end{equation}
	It can be derived that
	\begin{equation}
	\small
	Y_{k, 2}(\tau) =  \sum_{l = -1}^{1}  \sum_{i=0}^{N_c - 1}\frac{  e^{j\frac{2\pi i}{T_s} [  - l T_b - t_0 ]} \tilde{S}_{k+l, i} Z_{k,l}(\frac{2\pi i}{T_s} - \omega_0, \tau, t_0)}{N_c e^{j(k \omega_0 T_b  + \omega_0 T_s + \phi_0)}}. \label{Eq:Y2}
	\end{equation}
	Considering the delay in signal processing, we assume the jamming term $Z_{k,l}(\frac{2\pi i}{T_s} - \omega_0, \tau, t_0)$ is independent of the transmitted symbol $\tilde{S}_{k+l, i}$ in (\ref{Eq:Y2}). Therefore, we have
	\begin{equation}
	\mathbb{E} \{ 	Y_{k, 2}(\tau) \tilde{C}^*_{k+d} \} = 0, \forall k \in \mathbb{Z}^*, \tau \in [0, T_b), d \in \mathcal{K}.
	\end{equation}
	Note that the fourth moment of jamming interference $z(t)$ is bounded, so are the variances of $z(t)$ of $Y_{k, 2}(\tau) \tilde{C}^*_{k+d}$. In addition, 	for $\tau \in [0, T_b), d \in \mathcal{K}$, we have 
	\begin{equation}
	\mathbb{E} \{ 	Y_{k_1, 2}(\tau) \tilde{C}^*_{k_1+d} Y_{k_2, 2}^*(\tau) \tilde{C}_{k_2+d} \} = 0, \forall |k_1 - k_2| > 1.
	\end{equation}
	Therefore,
	\begin{equation}
	\frac{1}{K} \sum_{k = 0}^{K-1} Y_{k, 2}(\tau) \tilde{C}^*_{k+d} = 0, \forall \tau \in [0, T_b), d \in \mathcal{K}, a.s.. \label{Eq:App_tmp1}
	\end{equation}
	
	Third, define 
	\begin{equation}
	Y_{k, 3}(\tau) \overset{\triangle}{=} \int_{\tau - T_{CP} + k T_b}^{\tau - T_{CP,2} + k T_b}  s(t - t_0) e^{j(\omega_0 t + \phi_0)} z^*(t+T_s) \text{d}t.  \nonumber \\
	\end{equation}
	Following the same argument as in the derivation of (\ref{Eq:App_tmp1}) on $Y_{k, 2}(\tau)$, we have
	\begin{equation}
	\frac{1}{K} \sum_{k = 0}^{K-1} Y_{k, 3}(\tau) \tilde{C}^*_{k+d} = 0, \forall \tau \in [0, T_b), d \in \mathcal{K}, a.s..
	\end{equation} 
	
	At last, we define
	\begin{equation}
	Y_{k, 4}(\tau) \overset{\triangle}{=} \int_{\tau - T_{CP} + k T_b}^{\tau - T_{CP,2} + k T_b}  z(t)z^*(t+T_s) \text{d}t.  \nonumber \\
	\end{equation}
	Considering the security of phase shift sequence $C_k$ and the delay in signal processing, we assume that for $t \leq (k+1)T_b + T_s - T_{CP, 2}$, the jammer is unable to recover $\tilde{C}_{k+d}, \forall d \in \mathcal{K}$. Since the fourth moment of $z(t)$ is bounded,  we can have 
	\begin{equation}
	\frac{1}{K} \sum_{k = 0}^{K-1} Y_{k, 4}(\tau) \tilde{C}^*_{k+d} = 0, \forall \tau \in [0, T_b), d \in \mathcal{K}, a.s..
	\end{equation}
	
	In conclusion, by averaging the correlation coefficients $Y_k(\tau, d)$ over multiple OFDM blocks, (\ref{Eq:pre}) can be obtained.
	\end{proof}
	\bibliographystyle{IEEEtran}
	\bibliography{./ref} 

\end{document}